\newcommand{\ud}{\mathrm{d}}
\newcommand{\e}{\mathrm{e}}
\newcommand{\honeparticlehamiltonian}{h_N}
\newcommand{\honeparticlehamiltonianUnshifed}{\widehat h_N}
\newcommand{\honeparticlehamiltonianEigenvalues}{e_N}
\newcommand{\honeparticlehamiltonianUnshiftedEigenvalues}{\widehat e_N}
\newcommand{\freelaplacianEigenvalues}{e_N}
\newcommand{\freelaplacianEigenvalueOne}{\freelaplacianEigenvalues^{1,\omega}}
\newcommand{\freelaplacianEigenvalueTwo}{\freelaplacianEigenvalues^{2,\omega}}
\newcommand{\freelaplacianEigenvalueThree}{\freelaplacianEigenvalues^{3,\omega}}
\numberwithin{equation}{section}
\newtheorem{theorem}{Theorem}[section]
\newtheorem{lemma}[theorem]{Lemma}
\newtheorem{prop}[theorem]{Proposition}
\newtheorem{cor}[theorem]{Corollary}
\newtheorem{remark}[theorem]{Remark}
\theoremstyle{definition}
\newtheorem{defn}[theorem]{Definition}
\DeclarePairedDelimiterX\braket[2]{\langle}{\rangle}{#1 , #2}
\begin{document}
	
	\thispagestyle{empty}
	
	\vspace*{1cm}
	
	\begin{center}
		
		{\Large \bf On Bose--Einstein condensation in interacting Bose gases in the Kac--Luttinger model} \\

		\vspace*{2cm}
		
		{\large Chiara Boccato \footnote{E-mail address: {\tt chiara.boccato@unimi.it}}}  %
		
		\vspace*{5mm}
		
		Department of Mathematics\\
		Università degli Studi di Milano\\
		20133 Milano \\
		Italy
		
		\vspace*{5mm}
		
		{\large  Joachim~Kerner \footnote{E-mail address: {\tt joachim.kerner@fernuni-hagen.de}}}  %
		
		\vspace*{5mm}
		
		Department of Mathematics and Computer Science\\
		FernUniversit\"{a}t in Hagen\\
		58084 Hagen\\
		Germany

		\vspace*{5mm}

		{\large  Maximilian Pechmann \footnote{E-mail address: {\tt mpechmann@tntech.edu}}}%
		
		\vspace*{5mm}
		
		Department of Mathematics\\
		Tennessee Technological University\\
		Cookeville, TN 38505\\
		USA
		
	\end{center}
	
	\vfill
	
	\begin{abstract} \noindent We study interacting Bose gases of dimensions $2 \le d \in \mathds N$ at zero temperature in a random model known as the Kac--Luttinger model. Choosing the pair-interaction between the bosons to be of a mean-field type, we prove (complete) Bose--Einstein condensation in probability or with probability almost one into the minimizer of a Hartree-type functional. We accomplish this by building upon very recent results by Alain-Sol Sznitman on the spectral gap of the noninteracting Bose gas.
	\end{abstract}
	
	\newpage
	
	\section{Introduction} \label{section introduction}
	
    A gas of bosonic particles at low temperature may exhibit a quantum phenomenon known as Bose--Einstein condensation (BEC), meaning a macroscopic occupation of a one-particle quantum state. Originally predicted by \cite{Bose24,EinsteinBECI,EinsteinBECII} in a noninteracting setting, BEC is expected to occur under suitable conditions in interacting systems as well. A first rigorous proof of BEC in interacting dilute Bose gases has been achieved in a low density scaling regime called the Gross--Pitaevskii limit \cite{LiebSeiringer2002}. In the same regime it has been possible to obtain expressions for the  excitation energies over the condensate and to resolve the corresponding eigenfunctions \cite{BBCS2019}, confirming the predictions of Bogoliubov theory. Similar results have been obtained in mean field limits \cite{Seiringer2011,GrechSeiringer2013} of high density and weak interactions. It is interesting to note that in the above-mentioned regimes the collective behavior of the many-body system is captured by suitable effective one-particle theories, such as Gross-Pitaevskii theory or Hartree theory. These are nonlinear theories, in contrast with the linear underlying microscopic description.	 
    It is a natural to ask whether similar properties are stable in presence of randomly placed impurities, both in the noninteracting and in the interacting setting.
	
	The goal of this paper is to prove BEC in an interacting Bose gas in $\mathds R^d$, $2 \le d \in \mathds N$, placed in a random environment known as the Kac--Luttinger model, originally considered in \cite{kac1973bose,kac1974bose}. In those papers, Kac and Luttinger studied a system of noninteracting bosons in $\mathds R^3$ and with an external potential that is generated by a collection of infinitely many and randomly (according to a Poisson point process) placed hard balls. The key feature of such random systems, which makes them interesting in the context of BEC, is the existence of so-called Lifshitz tails at the bottom of the spectrum~\cite{pastur1992spectra}. A Lifshitz tail refers to an exponentially fast decaying density of states at low energies, and enhances the existence of BEC, at least in a gas of noninteracting bosons. This phenomenon is maybe even more transparent in the one-dimensional analog of the Kac--Luttinger model -- the so-called Luttinger--Sy model~\cite{LuttSyEnergy,GredeskulPastur75}. In any case, similar to what Einstein had observed for the three-dimensional Bose gas without external potential, the smallness of the density of states at the bottom of the spectrum leads to a \textit{finite} critical (particle) density and therefore to some sort of condensation. However, it is much more difficult to determine the actual nature of the condensate or, more precisely, its so-called type. The most classical notion is that of a type-I BEC, which means that only the one-particle ground state is macroscopically occupied and indeed, this is exactly what Kac and Luttinger conjectured for their random model. The proof of this conjecture was achieved only very recently by Alain-Sol Sznitman in \cite{SZNITMAN2023104197} in connection with results obtained in~\cite{KERNER2020287}, see also~\cite{kerner2023mini}. Consequently, due to those findings, the condensate in the \textit{noninteracting} Bose gas in the Kac--Luttinger model is by now well-understood. 
	
	In this paper, our goal is to introduce repulsive two-particle interactions and to prove BEC in the interacting (Kac--Luttinger) model. As mentioned above, due to the presence of Lifshitz tails, BEC is in some sense more stable in random environments, at least for a system of noninteracting bosons; hence one might expect it to be an easy task to allow for repulsive interactions without destroying the condensate. However, this turns out not to be the case, and the reason being is that the eigenfunctions of the underlying one-particle Schrödinger operator are highly localized. In other words, the bosons are spatially more close to each other and hence any strong enough repulsive interaction tends to destroy the condensate immediately as demonstrated in \cite{kerner_pechmann_2023} (for comparable results for the one-dimensional Luttinger--Sy model we refer to \cite{kerner2021effect}). Consequently, the results obtained in~\cite{kerner_pechmann_2023} imply that a one-particle state can be macroscopically occupied only if it is not too localized or if the two-particle interactions are weak enough. In the present paper, we will focus on the second aspect and show that (complete) BEC, in probability or with probability almost one depending on the strength of the interaction, occurs into a localized one-particle state (which turns out to be a minimizer of a certain nonlinear functional) for two-particle interactions that scale with the volume of the one-particle configurations space and tend to zero fast enough in the thermodynamic limit. In other words, we are able to prove (complete) BEC, in probability and with probability almost one in suitable (mean-field) scaling limits for the Kac--Luttinger model in dimension $2 \le d \in \mathds N$.

	The paper is organized as follows: In Section~\ref{section model} we introduce the random Kac--Luttinger model and the $N$-particle Hamiltonian. In Section~\ref{section hartree functional} we introduce a Hartree functional and derive auxiliary results. This will then allow us to prove condensation in Section~\ref{section proof bec}.  
	
	\section{The model} \label{section model}
	\subsection{The underlying random model}
	We consider a $d$-dimensional system, $2 \le d \in \mathds N$, of $N$ bosons, $N \in \mathds N$, in the box
	\begin{equation}
	\Lambda_N \coloneqq \left(-L_N/2,+L_N /2 \right)^d \subset \mathds{R}^d \quad \text{ with } \quad L_N = \rho^{-1/d} N^{1/d}
	\end{equation}
	for all $N \in \mathds N$. Here, $\rho > 0$ denotes the particle density. This means that the limit $N \rightarrow \infty$ refers to the standard thermodynamic limit. 
	
	The random model to be discussed employs an external potential $V$ on $\mathds R^d$ that is informally defined by (given a probability space $(\Omega, \mathcal F, \mathds P)$)
	\begin{equation}
	V : \Omega \times \mathds R^d \to \mathds R \cup \infty, \quad (\omega, x) \mapsto V^{\omega}(x) \coloneqq \sum\limits_{m} \infty \cdot B_{r}(x - x_m^{\omega}) \ .
	\end{equation}
	Here, the set $\{x_m^{\omega}\}_{m}$ is generated by a Poisson point process on $\mathds R^d$ with a constant intensity $\nu > 0$, and $B_{r}(x)$ is a ball of fixed radius $r>0$ with center $x \in \mathds R^d$.
	We denote the random domain that is generated by the random external potential $V$ by
	\begin{equation} 
	\Lambda^{\omega}_{N} \coloneqq \Lambda_N \backslash \bigcup\limits_m B_{r}(x_m^{\omega}), \quad \omega \in \Omega, \ N \in \mathds N 
	\end{equation}
	and call $\Lambda^{\omega}_{N}$ the vacancy set. We remark that the volume of $\Lambda_N^{\omega}$ tends to be a constant fraction of $\Lambda_N$ in the limit $N \to \infty$. More precisely, we have $\lim_{N \to \infty} \mathds P(\Omega_N^{(1),\eta}) = 1$ for any $\eta > 0$ where 
	\begin{equation}
	\Omega_N^{(1),\eta} \coloneqq \left\{ \omega \in \Omega : \left| |\Lambda_N^{\omega}|/|\Lambda_N| - \e^{-\nu \omega_d r^d} \right| < \eta \right\}
	\end{equation}
	and $\omega_d$ is the volume of the $d$-dimensional unit ball in $d$ \cite[p. 147]{sznitman1998brownian}.
	Also, the vacancy set $\Lambda_N^{\omega}$ may be divided into non-empty connected components (regions). However, $\Lambda_N^{\omega}$ has $\mathbb{P}$-almost surely only finitely many components for each $N \in \mathbb N$ \cite[Proposition 4.1]{meester1996continuum}, and from now on we consider only $\omega \in \Omega$ with that property. 
	
	For each $N \in \mathds N$, we denote the number of these components by $K_N^{\omega} \in \mathds N_0$ (with the understanding that $K_N^{\omega} \coloneqq 0$ whenever $\Lambda_N^{\omega} = \emptyset$.) Note that if $0 < \eta < \e^{-\nu \omega_d r^d}$, then for any $N \in \mathds N$ and any $\omega \in \Omega_N^{(1),\eta}$, we have $\Lambda_N^{\omega} \neq \emptyset$ and thus $K_N^{\omega} \ge 1$. We set $\mathcal K_N^{\omega} := \left\{1, \ldots, K_N^{\omega} \right\}$ if $K_N^{\omega} \ge 1$ (and $\mathcal K_N^{\omega} := \emptyset$ if $K_N^{\omega} = 0$) and label the components by $k \in \mathcal K_N^{\omega}$. Hence, we can denote each component of $\Lambda_N^{\omega}$ by $\Lambda_N^{k,\omega}$, $k \in \mathcal K_N^{\omega}$. Note that $\{\Lambda_N^{k,\omega}\}_{k \in \mathcal K_N^{\omega}}$ is a partition of $\Lambda_N^{\omega}$.
	
	\subsection{The many-particle Hamiltonian}
	
	For introducing the $N$-particle Hamiltonian, let
	\begin{equation}
	v_N : \mathds R^d \to \mathds R, \quad x \mapsto v_N(x) , \quad N \in \mathds N\ ,
	\end{equation}
	be a potential describing the interaction between two bosons. We shall assume that $v_N \in (L^1 \cap L^{\infty})(\mathds R^d)$ is a nonnegative, even, positive-definite (meaning that the Fourier transform $\widehat v_N$ of $v_N$ is nonnegative) function such that $\widehat v_N \in L^1(\mathds R^d)$ for all $N \in \mathds N$.
	
	Therefore, for any $0 < \eta < \e^{-\nu \omega_d r^d}$, all $N \in \mathds N$, and all $\omega \in \Omega_N^{(1),\eta}$, our system is described by the random, self-adjoint $N$-particle Hamiltonian
	\begin{equation} \label{N particle Hamiltonian}
	H_{N}^{\omega} \coloneqq - \sum_{j=1}^{N} \Delta_{j} + \sum_{1\leq i < j \leq N} v_N(x_i-x_j)
	\end{equation}
	defined on $L_{\text{s}}^2((\Lambda^{\omega}_{N})^N)$, employing Dirichlet-boundary conditions along the boundary of $\Lambda^{\omega}_N$. 
	We remark that the index $\text{s}$ refers to the totally symmetric subspace of $L^2((\Lambda^{\omega}_{N})^N)$. Moreover, the form domain of $H_N^{\omega}$ is given by $\mathcal D[H_N^{\omega}] = H^1_0(\Lambda^{\omega}_{N})$. Consequently, the ground state energy $E_{\text{QM},N}^{1,\omega}$ of $H_N^{\omega}$, that is, the lowest eigenvalue of $H_N^{\omega}$ is determined via
	\begin{equation}
	E_{\text{QM},N}^{1,\omega} \coloneqq \inf \left\{ \braket{\psi}{ H_N^{\omega} \psi} : \psi \in \mathcal D[H_N^{\omega}] \text{ and } \|\psi \|_{L^2(\mathds R^d)} = 1 \right\} \ ,
	\end{equation}
	where $ \braket{\psi}{ H_N^{\omega} \psi}$ is understood in the form sense. Lastly, for any $\eta > 0$, all $N \in \mathds N$, and all $\omega \notin \Omega_N^{(1),\eta}$, we set $H_N^{\omega} \coloneqq 0$. 
	
	\begin{remark}
		In order to keep the notation simpler in the following, we shall abbreviate $L^p$-norms by writing, for example, $\|\cdot\|_2$ instead of $\|\cdot\|_{L^2(\Lambda_N^{\omega})}$ or, similarly,  $\|\cdot\|_1$ instead of $\|\cdot\|_{L^1(\mathds R^d)}$. Hence, we neglect the actual domains of integration whenever its clear from the context.
	\end{remark}
	
	\begin{remark}
        Throughout this work, we will use the following notation regarding the limiting behaviour of sequences. Let $(a_N)_{N \in \mathds N}$, $(b_N)_{N \in \mathds N}$ be two sequences. Then $a_N \ll b_N$ if and only if $\lim_{N \to \infty} a_N / b_N = 0$. Furthermore, $a_N \sim b_N$ if and only if there are constants $c, C \in \mathds R$ such that $c a_N \le b_N \le C a_N$ for all but finitely many $N \in \mathds N$. Lastly, we write $a_N \lesssim b_N$ if and only if $a_N \ll b_N$ or $a_N \sim b_N$.
	\end{remark}
	
	We assume the particles to be weakly interacting in a suitable sense. More explicitly, for our main result, Theorem~\ref{theorem proof of BEC II}, we assume the interaction potential $v_N$ to scale with $N$ such that $\|v_N\|_1 \lesssim N^{-1}(\ln N)^{-2/d}$ and $v_N(0) \ll (\ln N)^{-(1+2/d)}$. For example, $v_N$ can be such that
	\begin{equation} \label{eq:vN}
	v_N(x)=\frac{\kappa V(x)}{N (\ln N)^{2/d}}
	\end{equation}
	where $V \in (L^1 \cap L^{\infty})(\mathds R^d)$ has support on a set independent on $N$, and the coupling constant $\kappa>0$ is sufficiently small.
	
	\begin{remark}
	A scaling of the interaction as in \eqref{eq:vN} is of particular relevance. To see this, note that with a probability that converges to one, the lowest eigenfunction of the Dirichlet Laplacian on $\Lambda_N^{\omega}$ is supported only on a single component of $\Lambda_N^\omega$, as we will prove in Lemma~\ref{Lemma ground state of Dirichlet Laplace has support only on one single component} in combination with Proposition~\ref{proposition probability omegaN2 converges to one}. In the nonpercolation regime (meaning that the intensity of the Poissonian point process is sufficiently large), this component has a volume bounded from above by $(const.) \ln N$, see for example \cite{kerner_pechmann_2023}. As long as the interaction is not too strong, we therefore expect the particles to be effectively localized in a volume at most of order $\ln N$. This leads to a particle density of at least $\sim N/\ln N$ in that component. Moreover, as shown in \cite{SZNITMAN2023104197}, with probability arbitrarily close to one as $N\to\infty$, the spectral gap of the Dirichlet Laplacian always stays bigger than $\sigma (\ln N)^{-(1+2/d)}$, where $\sigma$ is a small positive number (the smaller $\sigma$, the closer the probability is to one). Therefore, an interaction strength such as \eqref{eq:vN} leads to a potential energy per particle of order $\kappa (\ln N)^{-(1+2/d)}$, which is comparable in size to the spectral gap of the Dirichlet Laplacian.
	\end{remark}
	
	\section{Hartree-type functionals} \label{section hartree functional}
	In order to establish existence of Bose--Einstein condensation in Section~\ref{section proof bec}, we introduce the one-particle Hartree-type functionals
	\begin{equation}\label{HartreeFunctional componentwise}
	\mathcal{E}_N^{k,\omega}[\psi] \coloneqq \int\limits_{\Lambda_N^{k,\omega}} |\nabla \psi(x)|^2 \ \ud x + \frac{N-1}{2} \int\limits_{\Lambda_N^{k,\omega}}\int\limits_{\Lambda_N^{k,\omega}} v_N(x-y)|\psi(x)|^2|\psi(y)|^2\ \ud x \ud y
	\end{equation}
	with domain $\mathcal D(\mathcal E_N^{k,\omega}) \coloneqq H_0^1(\Lambda_N^{k,\omega})$
	for any $0 < \eta < \e^{-\nu \omega_d r^d}$, all $N \in \mathds N$, all $\omega \in \Omega_N^{(1),\eta}$, and all $k \in \mathcal K_N^{\omega}$. Note that the domain of $\mathcal{E}_N^{k,\omega}[\psi]$ only includes functions that are supported only on a single component of $\Lambda_N^{\omega}$.
	\begin{defn}
		Let an arbitrary $0 < \eta < \e^{-\nu \omega_d r^d}$, $N \in \mathds N$, $\omega \in \Omega_N^{(1),\eta}$, and $u \in H^1_0(\Lambda_N^{\omega})$ be given. We introduce the linear, self-adjoint, one-particle Hamiltonian $\honeparticlehamiltonian^{u,\omega}$ on $L^2(\Lambda_N^{\omega}) = \bigoplus_{k \in \mathcal K_N^{\omega}} L^2(\Lambda_N^{k,\omega})$ by
		\begin{align} \label{definition klein widetilde h N u omega}
		\honeparticlehamiltonian^{u,\omega} \coloneqq -\Delta + (N-1) (|u|^2 \ast v_N) - \frac{N-1}{2}\int\limits_{\Lambda_N^{\omega}} \int\limits_{\Lambda_N^{\omega}}v_N(x-y)|u(x)|^2|u(y)|^2\ \ud x \ud y
		\end{align}
		with form domain $\mathcal D[\honeparticlehamiltonian^{u,\omega}] \coloneqq H_0^1(\Lambda_N^{\omega}) = \bigoplus_{k \in \mathcal K_N^{\omega}} H_0^1(\Lambda_N^{k,\omega})$. Here, $*$ denotes the convolution of two functions. We write $\honeparticlehamiltonianEigenvalues^{1,u,\omega}$ and $\honeparticlehamiltonianEigenvalues^{2,u,\omega}$ for the lowest and second-lowest eigenvalue of $\honeparticlehamiltonian^{u,\omega}$, respectively, counting with multiplicity.
		
		In addition,  for any $k \in \mathcal K_N^{\omega}$ we define the linear, self-adjoint, one-particle Hamiltonian $\honeparticlehamiltonian^{u, k, \omega}$ in $L^2(\Lambda_N^{k,\omega})$ by
		\begin{equation} \label{definition klein widetilde h N u k omega}
		\honeparticlehamiltonian^{u, k, \omega} \coloneqq -\Delta + (N-1) (|u|^2 \ast v_N)- \frac{N-1}{2}\int\limits_{\Lambda_N^{\omega}} \int\limits_{\Lambda_N^{\omega}}v_N(x-y)|u(x)|^2|u(y)|^2\ \ud x \ud y
		\end{equation}
		with form domain $\mathcal D[\honeparticlehamiltonian^{u,k,\omega}] \coloneqq H_0^1(\Lambda_N^{k,\omega})$. Similarly, we denote the lowest and second-lowest eigenvalue of $\honeparticlehamiltonian^{u,k,\omega}$ by $\honeparticlehamiltonianEigenvalues^{1,u,k,\omega}$ and $\honeparticlehamiltonianEigenvalues^{2,u,k,\omega}$, respectively.
	\end{defn}
    \begin{prop} \label{Proposition ground state energy hNk equals componentwise hartree functional}
		For any $0 < \eta < \e^{-\nu \omega_d r^d}$, all $N \in \mathds N$, all $\omega \in \Omega_N^{(1),\eta}$, and all $k \in \mathcal K_N^{\omega}$, the functional $\mathcal{E}_N^{k,\omega}$ has (up to a phase) a unique, real-valued, positive minimizer $u_N^{k,\omega} \in H_0^1(\Lambda_N^{k,\omega})$ with $\|u_N^{k,\omega}\|_2 = 1$
		and corresponding energy $\varepsilon_N^{1,k,\omega}$,
		\begin{equation}  \label{definition widehat eN1omega}
		\mathcal{E}_N^{k,\omega}[u_N^{k,\omega}] = \varepsilon_N^{1,k,\omega} \coloneqq \min \left\{ \mathcal E_N^{k,\omega}[\psi] : \psi \in H_0^1(\Lambda_N^{k,\omega}), \|\psi\|_2 = 1 \right\}  \ge 0\ .
		\end{equation}
		Moreover, $u_N^{k,\omega}$ and $\varepsilon_N^{1,k,\omega}$ are also the ground state and the ground-state energy, respectively, of $\honeparticlehamiltonian^{u_N^{k,\omega}, k, \omega}$,
		\begin{equation}
		\honeparticlehamiltonian^{u_N^{k,\omega}, k, \omega} u_N^{k,\omega} = \honeparticlehamiltonianEigenvalues^{1,u_N^{k,\omega}, k,\omega} u_N^{k,\omega}
		\end{equation}
		where
		\begin{equation}
		\honeparticlehamiltonianEigenvalues^{1,u_N^{k,\omega}, k,\omega} \coloneqq \min \left\{ \braket{\psi}{ \honeparticlehamiltonian^{u_N^{k,\omega}, k, \omega} \psi} : \psi \in H_0^1(\Lambda_N^{k,\omega}) \text{ and } \|\psi\|_{2} = 1 \right\}  = \varepsilon_N^{1,k,\omega} \ .
		\end{equation}
	\end{prop}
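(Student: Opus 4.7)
The plan is to establish existence of a minimizer by the direct method, to obtain uniqueness (up to phase) via a convexity-in-density argument, and then to derive the Euler--Lagrange equation and identify $u_N^{k,\omega}$ with the ground state of $\honeparticlehamiltonian^{u_N^{k,\omega},k,\omega}$. For existence, first observe that $\mathcal E_N^{k,\omega}[\psi]\ge \|\nabla\psi\|_2^2\ge 0$ because $v_N\ge 0$; combined with Poincaré's inequality on the bounded open set $\Lambda_N^{k,\omega}$, this shows $\varepsilon_N^{1,k,\omega}\ge 0$ and that any minimizing sequence $(\psi_n)$ with $\|\psi_n\|_2=1$ is bounded in $H_0^1(\Lambda_N^{k,\omega})$. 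By Rellich--Kondrachov a subsequence converges weakly in $H_0^1$ and strongly in $L^2$ to some $u$ with $\|u\|_2=1$; the kinetic term is weakly lower semicontinuous, and the interaction integral passes to the limit since $|\psi_n|^2\to|u|^2$ in $L^1$ and $v_N\in L^\infty$. Hence $u$ is a minimizer.

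Since $\|\nabla|\psi|\|_2\le\|\nabla\psi\|_2$ and the interaction depends only on $|\psi|^2$, one may take $u$ real-valued and nonnegative. For uniqueness among nonnegative minimizers, pass to the density $\rho=\psi^2$: the kinetic functional $\rho\mapsto\int|\nabla\sqrt\rho|^2\,\ud x$ is strictly convex in $\rho$, and the interaction term, which is a positive multiple of $\int\widehat{v_N}(k)\,|\widehat\rho(k)|^2\,\ud k$, is convex because $\widehat{v_N}\ge 0$. Since $\int\rho\,\ud x=1$ is an affine constraint, the minimizing density is unique, which gives uniqueness of the nonnegative minimizer. Combined with the pointwise bound $|\nabla\psi|\ge|\nabla|\psi||$ (strict unless $\psi=\e^{i\theta}|\psi|$ for a constant $\theta$), this yields uniqueness of complex minimizers up to a phase.

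Next, the Euler--Lagrange equation for the constrained minimization produces a Lagrange multiplier $\mu\in\mathds R$ with
\begin{equation*}
\bigl(-\Delta+(N-1)(|u_N^{k,\omega}|^2*v_N)\bigr)u_N^{k,\omega}=\mu\, u_N^{k,\omega} \qquad \text{in } \Lambda_N^{k,\omega}.
\end{equation*}
Testing against $u_N^{k,\omega}$ gives
\begin{equation*}
\mu=\varepsilon_N^{1,k,\omega}+\tfrac{N-1}{2}\int\int v_N(x-y)|u_N^{k,\omega}(x)|^2|u_N^{k,\omega}(y)|^2\,\ud x\,\ud y,
\end{equation*}
so the constant shift built into the definition of $\honeparticlehamiltonian^{u_N^{k,\omega},k,\omega}$ absorbs exactly the extra term and the equation reduces to $\honeparticlehamiltonian^{u_N^{k,\omega},k,\omega}u_N^{k,\omega}=\varepsilon_N^{1,k,\omega}\,u_N^{k,\omega}$. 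Elliptic regularity together with the strong maximum principle (applicable because $|u_N^{k,\omega}|^2*v_N\in L^\infty$ and $\Lambda_N^{k,\omega}$ is connected) forces $u_N^{k,\omega}>0$ inside the component, and a positive $L^2$-eigenfunction of a Schrödinger-type operator is necessarily the ground state, so $\varepsilon_N^{1,k,\omega}=\honeparticlehamiltonianEigenvalues^{1,u_N^{k,\omega},k,\omega}$.

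I expect the main obstacle to be the uniqueness step: one must combine strict convexity of the Weizsäcker-type kinetic functional in the density with convexity of the nonlocal interaction, the latter crucially using the positive-definiteness assumption on $v_N$. The remaining parts are essentially routine applications of the direct method, elliptic regularity, and the strong maximum principle, although the possibly rough boundary of $\Lambda_N^{k,\omega}$ requires working throughout with $H^1_0$ and interior regularity rather than invoking boundary smoothness.
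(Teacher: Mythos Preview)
Your proposal is correct and follows essentially the same approach as the paper: direct method for existence via compact embedding and lower semicontinuity, reduction to nonnegative minimizers by the diamagnetic inequality, the Euler--Lagrange equation to identify $u_N^{k,\omega}$ as an eigenfunction of $\honeparticlehamiltonian^{u_N^{k,\omega},k,\omega}$, positivity and ground-state identification via the maximum principle, and uniqueness through strict convexity of the kinetic term in the density combined with convexity of the interaction (the paper cites \cite[Theorem~7.8]{lieb2001analysis} and \cite[Lemma~3.3]{lewin2015} for precisely these two ingredients). The only cosmetic differences are that the paper uses Fatou's lemma rather than $L^1$-convergence of $|\psi_n|^2$ for the interaction term, and it establishes positivity and the ground-state property before, rather than after, the uniqueness argument.
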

	\begin{proof} 
		The proof of existence is fairly standard but we include it for completeness: Let an arbitrary $0 < \eta < \e^{-\nu \omega_d r^d}$, $N \in \mathds N$, $\omega \in \Omega_N^{(1),\eta}$, and $k \in \mathcal K_N^{\omega}$ be given. To prove existence of a normalized minimizer $\widetilde u_N^{k,\omega} \in H_0^1(\Lambda_N^{k,\omega})$ of the functional \eqref{HartreeFunctional componentwise}, one first picks a minimizing sequence $(v_{N,n}^{k,\omega})_{n \in \mathds N}$ of normalized functions $v_{N,n}^{k,\omega} \in H_0^1(\Lambda_N^{k,\omega})$, $n \in \mathds N$. This sequence has, due to boundedness of $\Lambda_N^{k,\omega}$ and the compact embedding $H^1_0(\Lambda_N^{k,\omega}) \hookrightarrow
		L^2(\Lambda_N^{k,\omega})$, a subsequence that converges weakly in $H^1_0(\Lambda_N^{k,\omega})$ and strongly in $L^2(\Lambda_N^{k,\omega})$ to a function $\widetilde u_N^{k,\omega}$. Hence, $\|\widetilde u_N^{k,\omega}\|_2=1$.
		
		Now, for the kinetic part of \eqref{HartreeFunctional componentwise} (meaning the first integral in \eqref{HartreeFunctional componentwise}), one then employs lower semi-continuity of the norm while for the potential term of \eqref{HartreeFunctional componentwise} (the second integral in $\eqref{HartreeFunctional componentwise}$), one utilizes Fatou's lemma to conclude
		\begin{equation}
		\mathcal{E}_N^{k,\omega}[\widetilde u_N^{k,\omega}] \leq \liminf_{j \rightarrow \infty}\mathcal{E}_N^{k,\omega}[v_{N,n_j}^{k,\omega}] 
		\end{equation}
		along a subsequence $(v_{N,n_j}^{k,\omega})_{j \in \mathds N}$. This proves that $\widetilde u_N^{k,\omega}$ is a normalized minimizer.
		
		In a next step, the diamagnetic inequality implies $\mathcal{E}_N^{k,\omega}[\widetilde u_N^{k,\omega}] \geq \mathcal{E}_N^{k,\omega}[|\widetilde u_N^{k,\omega}|]$. Therefore, $u_N^{k,\omega} \coloneqq |\widetilde u_N^{k,\omega}|$ is a real-valued, non-negative and normalized minimizer. 
		
		Moreover, because $u_N^{k,\omega}$ minimizes the functional \eqref{HartreeFunctional componentwise}, it fulfils the Euler--Lagrange equation 
		\begin{equation}\begin{split}
        - \Delta u_N^{k,\omega} + &(N-1) (|u_N^{k,\omega}|^2 \ast v_N) u_N^{k,\omega}\\
		&= \Big(\varepsilon_N^{1,k,\omega} + \frac{N-1}{2} \int\limits_{\Lambda_N^{k,\omega}}\int\limits_{\Lambda_N^{k,\omega}} v_N(x-y)|u_N^{k,\omega}(x)|^2 |u_N^{k,\omega}(y)|^2\ \ud x \ud y \Big) u_N^{k,\omega} \ .
        \end{split}	    
		\end{equation}
		This means that $u_N^{k,\omega}$ is also an eigenfunction of $\honeparticlehamiltonian^{u_N^{k,\omega}, k, \omega}$ corresponding to the eigenvalue $\varepsilon_N^{1,k,\omega}$. Since $u_N^{k,\omega}$ is non-negative, it has to be the ground state of $\honeparticlehamiltonian^{u_N^{k,\omega}, k, \omega}$ and hence is positive and unique~\cite{lieb2001analysis}. Also, since $\widetilde u_N^{k,\omega}$ fulfils the same Euler--Lagrange equation, we conclude that $u_N^{k,\omega}=\widetilde u_N^{k,\omega}$ up to a phase.

        Finally, let us remark on uniqueness. Assuming existence of two different (positive) minimizers $\varphi_1,\varphi_2 \in H^1_0(\Lambda_N^{k,\omega})$, one defines, for $0< t < 1$, $\varphi:=\sqrt{t\varphi^2_1+(1-t)\varphi^2_2} \in H^1_0(\Lambda_N^{k,\omega})$ with the aim to show $\mathcal{E}_N^{k,\omega}[\varphi] < t\mathcal{E}_N^{k,\omega}[\varphi_1] +(1-t)\mathcal{E}_N^{k,\omega}[\varphi]_2=\varepsilon_N^{1,k,\omega}$, leading to a contradiction. To show such an inequality, we can employ the transformation employed in the proof of [Lemma~3.3,\cite{lewin2015}] to conclude a corresponding estimate for the non-linear term but with an $\leq$ sign. In addition, we can employ [Theorem~7.8,\cite{lieb2001analysis}] to conclude that the linear term in~\eqref{HartreeFunctional componentwise} fulfils the desired inequality but with an $<$ sign. From this we conclude uniqueness, taking into account that every minimizer is (up to a phase) positive as concluded above.
	\end{proof}
	
	\begin{defn}
		For any $0 < \eta < \e^{-\nu \omega_d r^d}$, all $N \in \mathds N$, and all $\omega \in \Omega_N^{(1),\eta}$ we denote the eigenvalues of the Dirichlet Laplacian $-\Delta$ in $L^2(\Lambda_N^{\omega})$ with form domain $\mathcal D [- \Delta] = H_0^1(\Lambda_N^{\omega}) = \bigoplus_{k \in \mathcal K_N^{\omega}} H_0^1(\Lambda_N^{k,\omega})$, arranged in increasing order and repeated according to their multiplicities, by $0 < \freelaplacianEigenvalueOne \le \freelaplacianEigenvalueTwo \le \freelaplacianEigenvalueThree \le \ldots$.
		We denote the normalized eigenfunctions corresponding to the two lowest eigenvalues $\freelaplacianEigenvalueOne$ and $\freelaplacianEigenvalueTwo$ by $\phi_N^{1,\omega}$ and $\phi_N^{2,\omega}$, respectively:
		\begin{align}
		- \Delta \phi_N^{1,\omega} & = \freelaplacianEigenvalueOne \phi_N^{1,\omega}
		\shortintertext{and}
		- \Delta \phi_N^{2,\omega} & = \freelaplacianEigenvalueTwo \phi_N^{2,\omega} \ .
		\end{align}
		Lastly, we define the restriction of $-\Delta$ to a single component $\Lambda_N^{k,\omega}$ of $\Lambda_N^{\omega}$, $k \in \mathcal K_N^{\omega}$, by $-\Delta|_{\Lambda_N^{k,\omega}}$, that is, $-\Delta|_{\Lambda_N^{k,\omega}}$ is the Dirichlet Laplacian $-\Delta$ in $L^2(\Lambda_N^{k,\omega})$ with form domain $\mathcal D[-\Delta|_{\Lambda_N^{k,\omega}}] = H_0^1(\Lambda_N^{k,\omega})$.
	\end{defn}
	
	\begin{defn} \label{definition Omega 2 event}
	 For any $0 < \eta < \e^{-\nu \omega_d r^d}$ and $N \in \mathds N$, we define the event
		\begin{equation}
		\Omega_N^{(2),\eta} \coloneqq \left\{ \omega \in \Omega_N^{(1),\eta} : \freelaplacianEigenvalueTwo - \freelaplacianEigenvalueOne > C_1^2 N \|v_N\|_1 (\freelaplacianEigenvalueOne)^{d/2} \right\}
		\end{equation}
		where $C_1 := 2 (4\pi)^{-d/4} \e$.
	\end{defn}
	
	The following Proposition~\ref{proposition probability omegaN2 converges to one} specifies the gap between the two lowest eigenvalues of the Dirichlet Laplacian $-\Delta$ on $\Lambda_N^{\omega}$. In particular, since $e_N^{1,\omega} > 0$, it shows that the ground state of the Dirichlet Laplacian $-\Delta$ on $\Lambda_N^{\omega}$ is unique, with a certain probability. This fact will be important in the proof of Theorem~\ref{theorem proof of BEC I} and~\ref{theorem proof of BEC II}.
	
	\begin{prop} \label{proposition probability omegaN2 converges to one}
		Let an $0 < \eta < \e^{-\nu \omega_d r^d}$ be given.
		\begin{enumerate}[(i)]
		\item For any $\varepsilon > 0$ there exists a constant $\kappa > 0$ such that if $\|v_N\|_1 \le \kappa N^{-1} (\ln N)^{-2/d}$ for all but finitely many $N \in \mathds N$, then 
		\begin{equation}
		\liminf_{N \to \infty} \mathds P (\Omega_N^{(2),\eta}) \ge 1 - \varepsilon \ .
		\end{equation}
		\item If $\|v_N\|_1 \ll N^{-1} (\ln N)^{-2/d}$, we have
		\begin{equation}
            \lim_{N \to \infty} \mathds P (\Omega_N^{(2),\eta}) =1 \ .
		\end{equation}
		\end{enumerate}
	\end{prop}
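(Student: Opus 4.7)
The plan is to combine two high-probability estimates: first, Sznitman's recent spectral gap lower bound from~\cite{SZNITMAN2023104197}, which asserts that for any $\varepsilon > 0$ one can find $\sigma > 0$ such that, with probability at least $1 - \varepsilon$ for $N$ large enough, $\freelaplacianEigenvalueTwo - \freelaplacianEigenvalueOne \ge \sigma (\ln N)^{-(1+2/d)}$; and second, a Lifshitz-tail type upper bound of the form $\freelaplacianEigenvalueOne \le C (\ln N)^{-2/d}$ holding with probability tending to $1$, where $C$ depends only on $d$ and on the parameters $\nu, r$ of the Kac--Luttinger model. The latter estimate follows from a standard argument: with overwhelming probability the vacancy set $\Lambda_N^\omega$ contains a Poisson-free ball of radius of order $(\ln N)^{1/d}$, and inserting the Dirichlet ground state of such a ball as a trial function into the min--max principle for $-\Delta$ on $\Lambda_N^\omega$ yields the claimed bound.

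I would denote the two good events by $\Omega_N^{\mathrm{gap},\sigma}$ and $\Omega_N^{\mathrm{top},C}$, both contained in $\Omega_N^{(1),\eta}$. On their intersection one has
\[
C_1^2 N \|v_N\|_1 (\freelaplacianEigenvalueOne)^{d/2} \le C_1^2 C^{d/2}\, N \|v_N\|_1 (\ln N)^{-1}.
\]
Under the hypothesis $\|v_N\|_1 \le \kappa N^{-1}(\ln N)^{-2/d}$ of part (i), the right-hand side is at most $C_1^2 C^{d/2} \kappa (\ln N)^{-(1+2/d)}$, which is strictly smaller than Sznitman's lower bound $\sigma (\ln N)^{-(1+2/d)}$ provided we choose $\kappa < \sigma/(C_1^2 C^{d/2})$. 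Hence $\Omega_N^{\mathrm{gap},\sigma} \cap \Omega_N^{\mathrm{top},C} \subset \Omega_N^{(2),\eta}$, and by arranging that each of the two good events has probability at least $1 - \varepsilon/2$ (with $\sigma$ chosen from Sznitman's result and $C$ taken large enough in the Lifshitz-tail bound), one obtains $\liminf_{N\to\infty} \mathds P(\Omega_N^{(2),\eta}) \ge 1 - \varepsilon$, proving part (i).

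For part (ii), note that under $\|v_N\|_1 \ll N^{-1}(\ln N)^{-2/d}$ the displayed right-hand side is $o\bigl((\ln N)^{-(1+2/d)}\bigr)$, so on $\Omega_N^{\mathrm{gap},\sigma} \cap \Omega_N^{\mathrm{top},C}$ the defining inequality of $\Omega_N^{(2),\eta}$ eventually holds for any fixed $\sigma > 0$. Taking $\sigma = \sigma_n \to 0$ along a diagonal subsequence and applying Sznitman's bound for each $\sigma_n$ upgrades the liminf to $1$. The principal obstacle is the upper bound $\freelaplacianEigenvalueOne \lesssim (\ln N)^{-2/d}$ with probability approaching $1$: the logarithmic exponent and the probabilistic bound must be sharp enough to match the regime of the gap estimate in~\cite{SZNITMAN2023104197}. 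This control of $\freelaplacianEigenvalueOne$ is classical (see~\cite{sznitman1998brownian}), but the bookkeeping of constants and exceptional sets across the two inputs is what really carries the proof; the remaining steps are an algebraic verification.
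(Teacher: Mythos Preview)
Your proposal is correct and follows essentially the same route as the paper: both combine Sznitman's spectral-gap lower bound $\freelaplacianEigenvalueTwo-\freelaplacianEigenvalueOne\ge\sigma(\ln N)^{-(1+2/d)}$ with the classical upper bound $\freelaplacianEigenvalueOne\le C(\ln N)^{-2/d}$ (the paper cites \cite[Chapter~4, Theorem~4.6]{sznitman1998brownian} for the latter, as an almost-sure statement, rather than sketching the Poisson-free-ball argument), and then verify algebraically that the defining inequality of $\Omega_N^{(2),\eta}$ holds on the intersection of the two good events for a suitable choice of $\kappa$. For part~(ii) the paper directly picks the sequence $\sigma_N\coloneqq C_1^2 C_2^{d/2} N(\ln N)^{2/d}\|v_N\|_1\to 0$ instead of invoking a diagonal argument, but this is only a cosmetic difference.
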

	\begin{proof}
		Firstly, note that
		\begin{equation} \label{Equation Sznitman Theorem 6.1}
		\lim\limits_{\sigma \to 0} \liminf\limits_{N \to \infty} \mathds P \left( \freelaplacianEigenvalueTwo - \freelaplacianEigenvalueOne \ge \sigma (\ln N)^{-(1+2/d)} \right) = 1 \ ,
		\end{equation}
		see \cite[Theorem~6.1]{SZNITMAN2023104197}.
		Also, there is a nonrandom constant $c>0$ such that almost surely and for all but finitely many $N \in \mathds N$, we have $\freelaplacianEigenvalueOne \le c (\ln N)^{-2/d}$ \cite[Chapter 4, Theorem 4.6]{sznitman1998brownian}. Therefore, there exists a constant $C_2 > 0$ such that $\lim_{N \to \infty} \mathds P ( \Omega_N^{(3)} ) = 1$ where 
		\begin{equation}
		 \Omega_N^{(3),\eta} \coloneqq \left\{ \omega \in \Omega_N^{(1),\eta} :  e_N^{1,\omega} \le C_2 (\ln N)^{-2/d} \right\} \ . 
		\end{equation}

		We firstly discuss case $(i)$: Let an $\varepsilon > 0$ be arbitrarily given. Then due to~\eqref{Equation Sznitman Theorem 6.1}, there exists a $\sigma > 0$ such that
		\begin{equation}
		\liminf\limits_{N \to \infty} \mathds P \left( \freelaplacianEigenvalueTwo - \freelaplacianEigenvalueOne \ge \sigma (\ln N)^{-(1+2/d)} \right) \ge 1 - \varepsilon \ ,
		\end{equation}
		Therefore, if $\|v_N\|_1 \le \kappa N^{-1} (\ln N)^{-2/d}$ for all but finitely many $N \in \mathds N$ and $\kappa \le \sigma C_1^{-2} C_2^{-d/2}$, we have
		\begin{equation}
		\liminf_{N \to \infty} \mathds P (\Omega_N^{(2),\eta}) \ge 1 - \varepsilon \ .
		\end{equation}
		
		As for case $(ii)$, we conclude with \eqref{Equation Sznitman Theorem 6.1} that for any sequence $(\sigma_N)_{N \in \mathds N}$ that converges to zero, we have
		\begin{equation}
		\lim\limits_{N \to \infty} \mathds P \left( \freelaplacianEigenvalueTwo - \freelaplacianEigenvalueOne \ge \sigma_N (\ln N)^{-(1+2/d)} \right) = 1 \ .
		\end{equation}
		We set $\sigma_N \coloneqq C_1^2 C_2^{d/2} N (\ln N)^{d/2} \|v_N\|_1$ for all $N \in \mathds N$. Then $(\sigma_N)_{N \in \mathds N}$ converges to zero and
		\begin{equation}
            \lim_{N \to \infty} \mathds P (\Omega_N^{(2),\eta}) =1 \ .
		\end{equation}
	\end{proof}
   
	In the following lemma we will show that the ground state of the Dirichlet Laplacian is, under suitable assumptions, supported on only one component. This component will then play a crucial role in the subsequent discussion.
	
	\begin{lemma} \label{Lemma ground state of Dirichlet Laplace has support only on one single component}
		Suppose $0 < \eta < \e^{-\nu \omega_d r^d}$, $N \in \mathds N$, and $\omega \in \Omega_N^{(2),\eta}$. Then $\phi_N^{1,\omega}$ has support only on one single component of $\Lambda_N^{\omega}$ \ .
	\end{lemma}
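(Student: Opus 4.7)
The plan is to use nothing more than the direct-sum decomposition of the Dirichlet Laplacian over the connected components of $\Lambda_N^{\omega}$, together with the (textbook) fact that on each connected component the ground state is simple and strictly positive. The hypothesis $\omega \in \Omega_N^{(2),\eta}$ is invoked only to rule out an accidental coincidence of ground state energies between two distinct components.

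More concretely, I would first record that since the components $\Lambda_N^{k,\omega}$, $k \in \mathcal K_N^{\omega}$, are pairwise disjoint open sets whose union is $\Lambda_N^{\omega}$, one has the orthogonal decomposition $L^2(\Lambda_N^{\omega}) = \bigoplus_{k \in \mathcal K_N^{\omega}} L^2(\Lambda_N^{k,\omega})$ and the quadratic form of the Dirichlet Laplacian splits correspondingly, yielding
\begin{equation*}
-\Delta = \bigoplus_{k \in \mathcal K_N^{\omega}} \bigl(-\Delta|_{\Lambda_N^{k,\omega}}\bigr).
\end{equation*}
In particular, $\sigma(-\Delta)$ is the union (with multiplicities) of the spectra of the restrictions, and every eigenfunction of $-\Delta$ decomposes into eigenfunctions of the restrictions for the same eigenvalue.

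Next, since each $\Lambda_N^{k,\omega}$ is an open, bounded, \emph{connected} domain, the ground state energy $\lambda^{1,k}$ of $-\Delta|_{\Lambda_N^{k,\omega}}$ is simple, with a strictly positive eigenfunction (standard Perron--Frobenius argument, see \cite{lieb2001analysis}). Consequently, $e_N^{1,\omega} = \min_{k \in \mathcal K_N^{\omega}} \lambda^{1,k}$, and if this minimum were attained at two or more distinct indices $k$, the eigenvalue $e_N^{1,\omega}$ would have multiplicity at least two in $\sigma(-\Delta)$, forcing $e_N^{2,\omega} = e_N^{1,\omega}$.

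Finally, I would derive the contradiction from membership in $\Omega_N^{(2),\eta}$. Since $\Lambda_N^{\omega}$ is bounded and nonempty, $e_N^{1,\omega} > 0$, so the right-hand side of
\begin{equation*}
e_N^{2,\omega} - e_N^{1,\omega} > C_1^2 N \|v_N\|_1 (e_N^{1,\omega})^{d/2}
\end{equation*}
is nonnegative, and in particular $e_N^{2,\omega} > e_N^{1,\omega}$ strictly. This rules out the degenerate scenario above, so the minimum defining $e_N^{1,\omega}$ is attained at a unique $k_0 \in \mathcal K_N^{\omega}$, and $\phi_N^{1,\omega}$ must be (up to a phase) the positive ground state of $-\Delta|_{\Lambda_N^{k_0,\omega}}$ extended by zero to the rest of $\Lambda_N^{\omega}$. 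I do not anticipate a real obstacle here: the only subtlety is making the orthogonal decomposition of the form domain $H_0^1(\Lambda_N^{\omega}) = \bigoplus_k H_0^1(\Lambda_N^{k,\omega})$ precise for the random components, which follows because the components are positively separated $\mathbb P$-almost surely (they are finitely many open connected pieces of the vacancy set).
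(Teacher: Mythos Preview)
Your proof is correct and follows essentially the same route as the paper: both arguments exploit the direct-sum decomposition of the Dirichlet Laplacian over the connected components and observe that a ground state supported on more than one component would force $e_N^{2,\omega}=e_N^{1,\omega}$, contradicting the strict gap built into the definition of $\Omega_N^{(2),\eta}$. The paper's version is simply terser---it restricts and normalizes $\phi_N^{1,\omega}$ to two components directly to produce two orthogonal eigenfunctions at $e_N^{1,\omega}$---whereas you add the (not strictly needed) Perron--Frobenius simplicity on each component, but the core mechanism is identical.
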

	\begin{proof}
		Let $0 < \eta < e^{-\nu \omega_d r^d}$, $N \in \mathds N$, and an arbitrary $\omega \in \Omega_N^{(1),\eta}$ be given. Suppose that $\phi_N^{1,\omega}$ is supported on more than one component of $\Lambda_N^{\omega}$. We then denote by $\hat k_1^{\omega}$ and $\hat k_2^{\omega}$ two components on which $\phi_N^{1,\omega}$ is supported. Define $\psi_N^{1,\omega} \coloneqq \|\phi_N^{1,\omega} \mathds 1_{\Lambda_N^{\hat k_1^{\omega},\omega}}\|_2^{-1} \phi_N^{1,\omega} \mathds 1_{\Lambda_N^{\hat k_1^{\omega},\omega}}$ and $\psi_N^{2,\omega} \coloneqq \|\phi_N^{1,\omega} \mathds 1_{\Lambda_N^{\hat k_2^{\omega},\omega}}\|_2^{-1} \phi_N^{1,\omega} \mathds 1_{\Lambda_N^{\hat k_2^{\omega},\omega}}$. Now, $\psi_N^{1,\omega}$ and $\psi_N^{2,\omega}$ are both normalized eigenfunctions of $-\Delta$ in $L^2(\Lambda_N)$ that have corresponding eigenvalue $\freelaplacianEigenvalueOne$. Consequently, we would have $\freelaplacianEigenvalueOne = \freelaplacianEigenvalueTwo$, and therefore $\omega \notin \Omega_N^{(2),\eta}$.
	\end{proof}

	\begin{defn} \label{definition widetilde j N omega}
		For any $0 < \eta < \e^{-\nu \omega_d r^d}$, $N \in \mathds N$, and $\omega \in \Omega_N^{(2),\eta}$ we define $\widetilde k_N^{\omega} \in \mathcal K_N^{\omega}$ to be the component $\Lambda_N^{k,\omega}$ on which the normalized eigenfunction $\phi_N^{1,\omega}$ corresponding to the lowest eigenvalue $\freelaplacianEigenvalueOne$ of $-\Delta$ in $L^2(\Lambda_N)$ has its support.
	\end{defn}
	
	\begin{remark}
	 To make our notation easier to read, we define
	 \begin{align}
		 u_N^{\widetilde k,\omega} & \coloneqq u_N^{\widetilde k_N^{\omega},\omega} \ .
    \end{align}
	 We furthermore use $\widetilde k$ instead of $\widetilde k_N^{\omega}$ and $\widetilde u$ instead of $u_N^{\widetilde k_N^{\omega}}$ in the superscriptum whenever it does not lead to confusion. For example, we write $\honeparticlehamiltonianEigenvalues^{1,\widetilde u, \widetilde k, \omega}$ instead of $\honeparticlehamiltonianEigenvalues^{1,u_N^{\widetilde k_N^{\omega},\omega}, \widetilde k_N^{\omega}, \omega}$.
	\end{remark}

	We need the next lemma in the proof of our main result, Theorem~\ref{theorem proof of BEC I}, more precisely in the last step of equation \eqref{eq:ub}.
	
		\begin{lemma} \label{lemma groundstate of h equals lowest componentwise h ground state}
		If $0 < \eta < \e^{-\nu \omega_d r^d}$, $N \in \mathds N$, and $\omega \in \Omega_N^{(2),\eta}$, then
		\begin{equation}
		\honeparticlehamiltonianEigenvalues^{1,\widetilde u, \widetilde k, \omega} = \min \Big\{ \braket{\psi}{\honeparticlehamiltonian^{\widetilde u,\omega} \psi} : \psi \in H_0^1(\Lambda_N^{\omega}) \text{ and } \|\psi\|_2 = 1 \Big\} \ .
		\end{equation}
	\end{lemma}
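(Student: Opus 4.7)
The plan is to leverage the direct-sum structure of $h_N^{\widetilde u, \omega}$. Since Dirichlet boundary conditions are imposed on the entire boundary $\partial \Lambda_N^\omega$ and the multiplicative term preserves every $L^2(\Lambda_N^{k,\omega})$, one has $h_N^{\widetilde u, \omega} = \bigoplus_{k \in \mathcal K_N^\omega} h_N^{\widetilde u, k, \omega}$, so that
$$\min_{\psi \in H_0^1(\Lambda_N^\omega),\,\|\psi\|_2 = 1}\langle\psi, h_N^{\widetilde u, \omega}\psi\rangle = \min_{k \in \mathcal K_N^\omega} e_N^{1,\widetilde u, k, \omega}.$$
The inequality ``$\leq$'' of the lemma is immediate by taking $\psi = \widetilde u \in H_0^1(\Lambda_N^{\widetilde k, \omega}) \subset H_0^1(\Lambda_N^\omega)$, which achieves $\varepsilon_N^{1,\widetilde k, \omega} = e_N^{1,\widetilde u, \widetilde k, \omega}$ by Proposition~\ref{Proposition ground state energy hNk equals componentwise hartree functional}. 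It therefore remains to show that the minimum over $k$ is attained at $\widetilde k$, i.e. $e_N^{1,\widetilde u, k, \omega} \geq \varepsilon_N^{1,\widetilde k, \omega}$ for every $k \neq \widetilde k$.

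For such $k$, pointwise nonnegativity $|\widetilde u|^2 * v_N \geq 0$ (since $v_N \geq 0$) combined with the variational principle gives
$$e_N^{1,\widetilde u, k, \omega} \geq e_N^{1,k,\omega} - \tfrac{N-1}{2}A, \qquad A := \iint v_N(x-y)|\widetilde u(x)|^2|\widetilde u(y)|^2\,\ud x\,\ud y \geq 0,$$
and Lemma~\ref{Lemma ground state of Dirichlet Laplace has support only on one single component} identifies $\widetilde k$ as the unique component carrying the Dirichlet ground state $\phi_N^{1,\omega}$, so $e_N^{1,k,\omega} \geq \freelaplacianEigenvalueTwo$ for every $k \neq \widetilde k$. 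In the opposite direction, inserting $\phi_N^{1,\omega}$ as a trial function in $\mathcal E_N^{\widetilde k, \omega}$ yields $\varepsilon_N^{1,\widetilde k, \omega} \leq \freelaplacianEigenvalueOne + \tfrac{N-1}{2}\iint v_N |\phi_N^{1,\omega}|^4$, and Young's inequality controls both $A$ and $\iint v_N|\phi_N^{1,\omega}|^4$ by $\|v_N\|_1$ times the respective squared sup norms.

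The sup norms are then estimated by heat-kernel ultracontractivity: the standard bound $\|\mathrm e^{t\Delta}\|_{L^2\to L^\infty} \lesssim t^{-d/4}$ for the Dirichlet Laplacian, optimized at $t = 1/e_0$, produces $\|\phi\|_\infty^2 \leq (C_1^2/4) e_0^{d/2}$ for any normalized Dirichlet ground state with eigenvalue $e_0$, handling $\phi_N^{1,\omega}$ with $e_0 = \freelaplacianEigenvalueOne$. For $\widetilde u$, which is the ground state of the perturbed operator $-\Delta + (N-1)(|\widetilde u|^2*v_N) - \tfrac{N-1}{2}A$, the analogous bound holds with $e_0$ replaced by $\varepsilon_N^{1,\widetilde k, \omega} + \tfrac{N-1}{2}A$ via the pointwise semigroup domination $|\mathrm e^{-t(-\Delta + V)}f|\leq \mathrm e^{t\Delta}|f|$, valid since $V = (N-1)(|\widetilde u|^2 * v_N) \geq 0$; in the weak-interaction regime $\|v_N\|_1 \lesssim N^{-1}(\ln N)^{-2/d}$ this shift is negligible, giving $\|\widetilde u\|_\infty^2 \lesssim (\freelaplacianEigenvalueOne)^{d/2}$ as well. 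Putting everything together, the desired inequality reduces to
$$\freelaplacianEigenvalueTwo - \freelaplacianEigenvalueOne \geq \tfrac{N-1}{2}\|v_N\|_1\bigl(\|\phi_N^{1,\omega}\|_\infty^2 + \|\widetilde u\|_\infty^2\bigr),$$
whose right-hand side is bounded by $C_1^2 N \|v_N\|_1 (\freelaplacianEigenvalueOne)^{d/2}$, precisely the defining condition of $\Omega_N^{(2),\eta}$ (and the constant $C_1 = 2(4\pi)^{-d/4}\mathrm e$ is chosen with exactly enough slack to absorb both sup norms). The main obstacle is the self-consistent sup bound on $\widetilde u$, for which the semigroup-domination step just sketched is the key technical device: it reduces the perturbed problem to the standard ultracontractive estimate at the cost of an eigenvalue shift that the weak-interaction hypothesis renders harmless.
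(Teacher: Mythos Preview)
Your overall structure---exploiting the direct-sum decomposition $h_N^{\widetilde u,\omega}=\bigoplus_k h_N^{\widetilde u,k,\omega}$ and reducing to the componentwise comparison $e_N^{1,\widetilde u,k,\omega}\ge e_N^{1,\widetilde u,\widetilde k,\omega}$ for $k\neq\widetilde k$---is sound and matches the paper's splitting $\psi=(1-\varepsilon)^{1/2}\psi_1+\varepsilon^{1/2}\psi_2$. The difficulty lies in your upper bound and in the appearance of $\|\widetilde u\|_\infty$.

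There is a genuine gap in the bootstrap for $\|\widetilde u\|_\infty$. The lemma is a pointwise statement: it assumes only $\omega\in\Omega_N^{(2),\eta}$, with no asymptotic hypothesis on $\|v_N\|_1$. Yet your argument explicitly invokes ``the weak-interaction regime $\|v_N\|_1\lesssim N^{-1}(\ln N)^{-2/d}$'' to declare the eigenvalue shift $\tfrac{N-1}{2}A$ negligible. Without that extra hypothesis the self-consistent estimate $\|\widetilde u\|_\infty^2\le C_1^2\bigl(\varepsilon_N^{1,\widetilde k,\omega}+\tfrac{N-1}{2}A\bigr)^{d/2}$, with $A\le\|v_N\|_1\|\widetilde u\|_\infty^2$, does not close: the condition defining $\Omega_N^{(2),\eta}$ controls $C_1^2 N\|v_N\|_1(e_N^{1,\omega})^{d/2}$ only relative to the Laplacian gap, not in any absolute sense. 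Even granting the extra hypothesis, the claim that $C_1$ ``has exactly enough slack to absorb both sup norms'' is not justified, since the semigroup bound applied to $\widetilde u$ produces $C_1^2\lambda^{d/2}$ with $\lambda>e_N^{1,\omega}$, so the sum $\|\phi_N^{1,\omega}\|_\infty^2+\|\widetilde u\|_\infty^2$ may genuinely exceed $2C_1^2(e_N^{1,\omega})^{d/2}$.

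The paper avoids all of this by a small but decisive change of viewpoint. Rather than bounding $\varepsilon_N^{1,\widetilde k,\omega}$ through the Hartree functional (which produces $\tfrac{N-1}{2}\iint v_N|\phi_N^{1,\omega}|^4$ and then forces you to handle $A$ separately), the paper passes to the \emph{unshifted} linear operators $\widehat h_N^{\widetilde u,\omega}$, $\widehat h_N^{\widetilde u,k,\omega}$, so that the constant $\tfrac{N-1}{2}A$ cancels from both sides of the comparison. One is left with showing $e_N^{2,\omega}\ge\widehat e_N^{1,\widetilde u,\widetilde k,\omega}$, and the latter is bounded by testing $\widehat h_N^{\widetilde u,\widetilde k,\omega}$ against $\phi_N^{1,\omega}$:
\[
\widehat e_N^{1,\widetilde u,\widetilde k,\omega}\le e_N^{1,\omega}+(N-1)\iint v_N(x-y)\,|\widetilde u(x)|^2\,|\phi_N^{1,\omega}(y)|^2\,\ud x\,\ud y\le e_N^{1,\omega}+C_1^2 N\|v_N\|_1(e_N^{1,\omega})^{d/2},
\]
where the cross-term is controlled using only $\|\widetilde u\|_2=1$ and $\|\phi_N^{1,\omega}\|_\infty$ (via \cite[Lemma~1.1]{SZNITMAN2023104197}). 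No estimate on $\|\widetilde u\|_\infty$ is ever needed, and the definition of $\Omega_N^{(2),\eta}$ then gives exactly $e_N^{2,\omega}>\widehat e_N^{1,\widetilde u,\widetilde k,\omega}$. Adapting your argument in this way---test the linear operator rather than the Hartree functional, and work with the unshifted versions so that $A$ drops out---removes the gap entirely.
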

	\begin{proof}
		Let an $0 < \eta < \e^{-\nu \omega_d r^d}$ and $N \in \mathds N$ be given. Choose an arbitrary $\omega \in \Omega_N^{(2),\eta}$. We show that for any function $\psi \in H_0^1(\Lambda_N^{\omega})$ with $\|\psi\|_2 = 1$ we have
		\begin{equation} \label{Equation Lemma 3.10 asxclke}
		\braket{\psi}{\honeparticlehamiltonian^{\widetilde u, \omega} \psi} \ge \honeparticlehamiltonianEigenvalues^{1,\widetilde u, \widetilde k, \omega} \ .
		\end{equation}
		
		To do this, we show the corresponding version of \eqref{Equation Lemma 3.10 asxclke} for the unshifted analogs of $\honeparticlehamiltonian^{u,\omega}$ defined in \eqref{definition klein widetilde h N u omega} and to $\honeparticlehamiltonian^{u,k,\omega}$ defined in \eqref{definition klein widetilde h N u k omega}. Namely, we consider 
		\begin{align} \label{definition klein widetilde h N u omega unshifted}
		\honeparticlehamiltonianUnshifed^{u,\omega} \coloneqq -\Delta + (N-1) (|u|^2 \ast v_N)
		\end{align}
		and 
		\begin{align} \label{definition klein widetilde h N u k omega unshifted}
		\honeparticlehamiltonianUnshifed^{u,k,\omega} \coloneqq -\Delta + (N-1) (|u|^2 \ast v_N)
		\end{align}
		with the same domains as the associated unshifted operators. 	We denote the lowest eigenvalue of $\honeparticlehamiltonianUnshifed^{u,k,\omega}$ by $\honeparticlehamiltonianUnshiftedEigenvalues^{1,u,k,\omega}$.

		Now, for any function $\psi \in H_0^1(\Lambda_N^{\omega})$ with $\|\psi\|_2 = 1$ we write
		\begin{equation}
		\psi = (1 - \varepsilon)^{1/2} \psi_1 + \varepsilon^{1/2} \psi_2
		\end{equation}
		for $0 \le \varepsilon \le 1$, where
		$\psi_1 := \|\psi \mathds 1_{\Lambda_N^{\widetilde k_N^{\omega}}} \|_2^{-1} \psi \mathds 1_{\Lambda_N^{\widetilde k_N^{\omega}}}$ and $\psi_2 := \|\psi \mathds 1_{\Lambda_N^{\omega} \backslash \Lambda_N^{\widetilde k_N^{\omega}}}\|_2^{-1} \psi \mathds 1_{\Lambda_N^{\omega} \backslash \Lambda_N^{\widetilde k_N^{\omega}}}$ and whenever $\psi\mathds 1_{\Lambda_N^{\widetilde k_N^{\omega}}} \neq 0$. We then have
		\begin{align}
		\begin{split}
		\braket{\psi}{\honeparticlehamiltonianUnshifed^{\widetilde u, \omega} \psi} & = (1-\varepsilon)\braket{\psi_1}{\honeparticlehamiltonianUnshifed^{\widetilde u, \omega} \psi_1} + \varepsilon \braket{\psi_2}{\honeparticlehamiltonianUnshifed^{\widetilde u, \omega} \psi_2} \\
		& \ge (1-\varepsilon) \honeparticlehamiltonianUnshiftedEigenvalues^{1, \widetilde u, \widetilde k, \omega} + \varepsilon \freelaplacianEigenvalueTwo \ .
		\end{split}
		\end{align}
	
  Whenever $\psi\mathds 1_{\Lambda_N^{\widetilde k_N^{\omega}}}=0$, one directly obtains 
  \begin{equation}
      \braket{\psi}{\honeparticlehamiltonianUnshifed^{\widetilde u, \omega} \psi} =  \braket{\psi_2}{\honeparticlehamiltonianUnshifed^{\widetilde u, \omega} \psi_2} \ge \freelaplacianEigenvalueTwo \ .
  \end{equation}
		
		We know claim that $\freelaplacianEigenvalueTwo \ge \honeparticlehamiltonianUnshiftedEigenvalues^{1, \widetilde u, \widetilde k, \omega}$: Indeed, let $\phi_N^{1,\omega}$ be the normalized ground state of $-\Delta$ in $L^2(\Lambda_N^{\omega})$, that is, the normalized eigenfunction corresponding to the eigenvalue $\freelaplacianEigenvalueOne$. Note that $\phi_N^{1,\omega} \in \mathcal D[\honeparticlehamiltonianUnshifed^{\widetilde u, \omega}]$. Thus, we have
		\begin{align} \label{Lemma 3.9 upper bound widehat eN1}
		\begin{split}
		\honeparticlehamiltonianUnshiftedEigenvalues^{1, \widetilde u, \widetilde k, \omega} & \le \braket{\phi_N^{1,\omega}}{\widehat h_N^{\widetilde u, \widetilde k, \omega} \phi_N^{1,\omega}}\\
		&\le \freelaplacianEigenvalueOne+N \int\limits_{\Lambda^{\omega}_{N}} \int\limits_{\Lambda^{\omega}_{N}} v_N(x-y) |u_N^{\widetilde k, \omega}(x)|^2|\phi_N^{1,\omega}(y)|^2\ \ud x \ud y   \\
		&\leq \freelaplacianEigenvalueOne+  N \|v_N\|_1 \|\phi_N^{1,\omega}\|^2_{\infty} \\
		& \leq \freelaplacianEigenvalueOne+  C_1^2 N \|v_N\|_1 (\freelaplacianEigenvalueOne)^{d/2}
		\end{split}
		\end{align}
		with $C_1 = 2 (4\pi)^{-d/4} \e$, where we made use of $\|u_N^{\widetilde k, \omega}\|_2 = 1$, \cite[Lemma~1.1]{SZNITMAN2023104197}, and the fact that $\phi_N^{1,\omega}$ has support only on $\Lambda_N^{\widetilde k_N^{\omega}}$, see Lemma~\ref{Lemma ground state of Dirichlet Laplace has support only on one single component} and Definition~\ref{definition widetilde j N omega}. So if $\freelaplacianEigenvalueTwo < \honeparticlehamiltonianUnshiftedEigenvalues^{1, \widetilde u, \widetilde k, \omega}$, then
		\begin{equation}
		\freelaplacianEigenvalueTwo < \freelaplacianEigenvalueOne + C_1 N \|\psi_N\|(\freelaplacianEigenvalueOne)^{ d/2} \ ,
		\end{equation}
		which contradicts our assumption that $\omega \in \Omega_N^{(2),\eta}$.
		
		To conclude, one has
		\begin{equation}
		\braket{\psi}{\honeparticlehamiltonianUnshifed^{\widetilde u, \omega} \psi} \ge \honeparticlehamiltonianUnshiftedEigenvalues^{1, \widetilde u, \widetilde k, \omega} \ .
		\end{equation}
		Now, note that the difference between $\honeparticlehamiltonianEigenvalues^{1,\widetilde u, \widetilde k, \omega}$ and $\honeparticlehamiltonianUnshiftedEigenvalues^{1, \widetilde u, \widetilde k, \omega}$ on the one hand and $\braket{\psi}{\honeparticlehamiltonian^{\widetilde u,\omega} \psi}$ and $\braket{\psi}{\honeparticlehamiltonianUnshifed^{\widetilde u,\omega} \psi}$ on the other hand is the same constant. Hence, one also has
		\begin{equation}
		\braket{\psi}{\honeparticlehamiltonian^{\widetilde u, \omega} \psi} \ge \honeparticlehamiltonianEigenvalues^{1,\widetilde u, \widetilde k, \omega} \ .
		\end{equation}

		Finally, since $u_N^{\widetilde k,\omega} \in H_0^1(\Lambda_N^{\omega})$, $\|u_N^{\widetilde k,\omega}\|_2 = 1$, and
		\begin{equation}
		\braket{u_N^{\widetilde k,\omega}}{\honeparticlehamiltonian^{\widetilde u, \omega} u_N^{\widetilde k,\omega}} = \honeparticlehamiltonianEigenvalues^{1,\widetilde u, \widetilde k, \omega} \ ,
		\end{equation}
		the statement follows.
	\end{proof}
	
	The next proposition, together with Proposition~\ref{proposition probability omegaN2 converges to one}, gives us a lower bound for the gap between the two lowest eigenvalues of the operator $\honeparticlehamiltonian^{\widetilde u, \omega}$ \eqref{definition klein widetilde h N u omega}. Recall that the eigenvalues are counted with multiplicity. This proposition also ensures that the ground state of $\honeparticlehamiltonian^{\widetilde u, \omega}$ is unique. The details are given in the subsequent Corollary~\ref{corollary unique ground state of h defined on all components}. Proposition~\ref{proposition estimate gap} and Corollary~\ref{corollary unique ground state of h defined on all components} are also crucial for the proofs of Theorems~\ref{theorem proof of BEC I} and~\ref{theorem proof of BEC II}.
	
	\begin{prop}\label{proposition estimate gap} Let an arbitrary $0 < \eta < \e^{-\nu \omega_d r^d}$ be given. Then for all $\omega \in \Omega_N^{(2)}$, we have
	
	\begin{align}
	\honeparticlehamiltonianEigenvalues^{2, \widetilde u, \omega} - \honeparticlehamiltonianEigenvalues^{1, \widetilde u, \omega} \geq \freelaplacianEigenvalueTwo - \freelaplacianEigenvalueOne - C_1^2 N \|v_N\|_1 (\freelaplacianEigenvalueOne)^{d/2} \ ,
		\end{align}
		where $C_1 = 2 (4\pi)^{-d/4} \e$, $\honeparticlehamiltonianEigenvalues^{1, \widetilde u, \omega}$ and $\honeparticlehamiltonianEigenvalues^{2, \widetilde u, \omega}$ are the two lowest eigenvalues of $\honeparticlehamiltonian^{\widetilde u, \omega}$ in $L^2(\Lambda_N^{\omega})$, see~\eqref{definition klein widetilde h N u omega}, and $\freelaplacianEigenvalueOne$ and $\freelaplacianEigenvalueTwo$ are the two lowest eigenvalues of the Dirichlet Laplacian $-\Delta$ on $\Lambda_N^{\omega}$.
    \end{prop}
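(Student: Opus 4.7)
The plan is to bound the gap of $\honeparticlehamiltonian^{\widetilde u, \omega}$ by comparison with the gap of the Dirichlet Laplacian, exploiting the nonnegativity of the Hartree potential $(N-1)(|\widetilde u|^2 \ast v_N)$. Since $\honeparticlehamiltonian^{\widetilde u, \omega}$ and its unshifted analog $\honeparticlehamiltonianUnshifed^{\widetilde u, \omega}$ differ only by a global constant, the gap between their two lowest eigenvalues is the same. I would therefore work throughout with $\honeparticlehamiltonianUnshifed^{\widetilde u, \omega}$ and bound its two lowest eigenvalues $\honeparticlehamiltonianUnshiftedEigenvalues^{1, \widetilde u, \omega}$ and $\honeparticlehamiltonianUnshiftedEigenvalues^{2, \widetilde u, \omega}$ separately.

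For the lower bound on $\honeparticlehamiltonianUnshiftedEigenvalues^{2, \widetilde u, \omega}$, I would observe that $(N-1)(|\widetilde u|^2 \ast v_N) \ge 0$ pointwise, and hence $\honeparticlehamiltonianUnshifed^{\widetilde u, \omega} \ge -\Delta$ in the sense of quadratic forms on the common form domain $H_0^1(\Lambda_N^{\omega})$. The min-max principle applied to the second eigenvalue then immediately yields $\honeparticlehamiltonianUnshiftedEigenvalues^{2, \widetilde u, \omega} \ge \freelaplacianEigenvalueTwo$.

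For the upper bound on $\honeparticlehamiltonianUnshiftedEigenvalues^{1, \widetilde u, \omega}$, I would recycle what has already been proved in Lemma~\ref{lemma groundstate of h equals lowest componentwise h ground state}. That proof establishes in particular the unshifted identity $\honeparticlehamiltonianUnshiftedEigenvalues^{1, \widetilde u, \omega} = \honeparticlehamiltonianUnshiftedEigenvalues^{1, \widetilde u, \widetilde k, \omega}$, and the chain of inequalities \eqref{Lemma 3.9 upper bound widehat eN1} tests $\honeparticlehamiltonianUnshifed^{\widetilde u, \widetilde k, \omega}$ against the Dirichlet ground state $\phi_N^{1,\omega}$ (which is supported only on $\Lambda_N^{\widetilde k, \omega}$ by Lemma~\ref{Lemma ground state of Dirichlet Laplace has support only on one single component}), using Young's inequality and the $L^\infty$-bound $\|\phi_N^{1,\omega}\|_\infty \le C_1 (\freelaplacianEigenvalueOne)^{d/4}$ from \cite[Lemma~1.1]{SZNITMAN2023104197}, to produce
\begin{equation}
\honeparticlehamiltonianUnshiftedEigenvalues^{1, \widetilde u, \omega} \le \freelaplacianEigenvalueOne + C_1^2 N \|v_N\|_1 (\freelaplacianEigenvalueOne)^{d/2} \ .
\end{equation}

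Subtracting the upper bound from the lower bound then delivers the claimed gap estimate after reinterpreting $\honeparticlehamiltonianUnshiftedEigenvalues^{j,\widetilde u, \omega}$ as $\honeparticlehamiltonianEigenvalues^{j,\widetilde u, \omega}$ via the common constant shift. I do not expect any serious obstacle: both ingredients — quadratic-form monotonicity under the nonnegative Hartree perturbation and the ultracontractivity-type $L^\infty$-bound of Sznitman — have already been developed in the preceding lemmas, so the present proof is essentially a bookkeeping step built on top of Lemma~\ref{lemma groundstate of h equals lowest componentwise h ground state}.
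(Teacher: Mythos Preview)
Your proposal is correct and follows essentially the same route as the paper: pass to the unshifted operator, use nonnegativity of the Hartree potential and the min--max principle to get $\honeparticlehamiltonianUnshiftedEigenvalues^{2, \widetilde u, \omega} \ge \freelaplacianEigenvalueTwo$, invoke the test-function bound \eqref{Lemma 3.9 upper bound widehat eN1} for $\honeparticlehamiltonianUnshiftedEigenvalues^{1, \widetilde u, \omega}$, and subtract. The only cosmetic difference is that you route the upper bound through the identity $\honeparticlehamiltonianUnshiftedEigenvalues^{1, \widetilde u, \omega} = \honeparticlehamiltonianUnshiftedEigenvalues^{1, \widetilde u, \widetilde k, \omega}$ from Lemma~\ref{lemma groundstate of h equals lowest componentwise h ground state}, whereas the paper simply cites \eqref{Lemma 3.9 upper bound widehat eN1} directly (since $\phi_N^{1,\omega} \in H_0^1(\Lambda_N^{\omega})$ already serves as a trial state for $\honeparticlehamiltonianUnshifed^{\widetilde u,\omega}$); either way the same inequality results.
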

	\begin{proof} Let an arbitrary $0 < \eta < \e^{-\nu \omega_d r^d}$, $N \in \mathds N$, and $\omega \in \Omega_N^{(2),\eta}$ be given.
	We begin by proving this statement for the operator $\honeparticlehamiltonianUnshifed^{\widetilde u, \omega}$ \eqref{definition klein widetilde h N u omega unshifted} first.
	
	We have $\honeparticlehamiltonianUnshiftedEigenvalues^{2, \widetilde u, \omega} \geq \freelaplacianEigenvalueTwo$, since $v_N \geq 0$ and $\mathcal D[\honeparticlehamiltonianUnshifed^{\widetilde u, \omega}] = \mathcal D[-\Delta]$. On the other hand, we have 
		\begin{align}
		\honeparticlehamiltonianUnshiftedEigenvalues^{1,\widetilde u, \omega}	\leq \freelaplacianEigenvalueOne+  C_1^2 N \|v_N\|_1\cdot (\freelaplacianEigenvalueOne)^{d/2}
		\end{align}
		with $C_1 = 2 (4\pi)^{-d/4} \e$, see \eqref{Lemma 3.9 upper bound widehat eN1}. Therefore,
		\begin{equation}
		 \honeparticlehamiltonianUnshiftedEigenvalues^{2, \widetilde u, \omega} - \honeparticlehamiltonianUnshiftedEigenvalues^{1, \widetilde u, \omega} \geq \freelaplacianEigenvalueTwo - \freelaplacianEigenvalueOne - C_1^2 N \|v_N\|_1 (\freelaplacianEigenvalueOne)^{d/2} \ .
		\end{equation}
		The argument that the difference between $\honeparticlehamiltonianEigenvalues^{2, \widetilde u, \omega}$ and $\honeparticlehamiltonianUnshiftedEigenvalues^{2, \widetilde u, \omega}$ on the one hand and $\honeparticlehamiltonianEigenvalues^{1, \widetilde u, \omega}$ and $\honeparticlehamiltonianUnshiftedEigenvalues^{1, \widetilde u, \omega}$ on the other hand is the same constant, and that this constant disappears for the expression $\honeparticlehamiltonianEigenvalues^{2, \widetilde u, \omega} - \honeparticlehamiltonianEigenvalues^{1, \widetilde u, \omega}$ now completes this proof.
	\end{proof}
	
	\begin{cor}\label{corollary unique ground state of h defined on all components}
		Suppose $0 < \eta < \e^{-\nu \omega_d r^d}$, $N \in \mathds N$, and $\omega \in \Omega_N^{(2),\eta}$. Then 
		\begin{equation} \label{Hauptungleichung Gap}
		\honeparticlehamiltonianEigenvalues^{2, \widetilde u, \omega} - \honeparticlehamiltonianEigenvalues^{1, \widetilde u, \omega} > 0
		\end{equation}
	\end{cor}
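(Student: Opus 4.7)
The plan is that this corollary is essentially immediate from combining Proposition~\ref{proposition estimate gap} with the defining inequality of the event $\Omega_N^{(2),\eta}$ given in Definition~\ref{definition Omega 2 event}. No further analytic input is required.

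More precisely, I would proceed as follows. Fix $0 < \eta < \e^{-\nu \omega_d r^d}$, $N \in \mathds N$, and $\omega \in \Omega_N^{(2),\eta}$. Proposition~\ref{proposition estimate gap} yields the lower bound
\begin{equation}
\honeparticlehamiltonianEigenvalues^{2, \widetilde u, \omega} - \honeparticlehamiltonianEigenvalues^{1, \widetilde u, \omega} \ge \freelaplacianEigenvalueTwo - \freelaplacianEigenvalueOne - C_1^2 N \|v_N\|_1 (\freelaplacianEigenvalueOne)^{d/2} ,
\end{equation}
with $C_1 = 2(4\pi)^{-d/4}\e$. By the very definition of $\Omega_N^{(2),\eta}$, the right-hand side is strictly positive, so the strict inequality~\eqref{Hauptungleichung Gap} follows immediately.

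The main (and essentially only) thing to observe is the matching of the two constants: the constant $C_1 = 2(4\pi)^{-d/4}\e$ appearing in Proposition~\ref{proposition estimate gap} is exactly the one used in Definition~\ref{definition Omega 2 event}, so the threshold term cancels the lower bound without any loss. There is no real obstacle; the content of the corollary is simply that the event $\Omega_N^{(2),\eta}$ was designed precisely so that the estimate of Proposition~\ref{proposition estimate gap} becomes nontrivial and yields a strict gap. In particular, combined with Proposition~\ref{proposition probability omegaN2 converges to one}, this shows that with high probability the ground state of $\honeparticlehamiltonian^{\widetilde u, \omega}$ is simple, which is the property actually used later in the BEC proofs.
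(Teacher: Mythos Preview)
Your proposal is correct and matches the paper's own proof essentially verbatim: the paper simply says the corollary follows immediately from Definition~\ref{definition Omega 2 event} and Proposition~\ref{proposition estimate gap}, which is exactly what you do. Your additional remark about the matching of the constant $C_1$ and the downstream role via Proposition~\ref{proposition probability omegaN2 converges to one} is accurate commentary but not needed for the proof itself.
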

	\begin{proof}
		This follows immediately from Definition~\ref{definition Omega 2 event} and Proposition~\ref{proposition estimate gap}.
	\end{proof}

	\section{Proof of Bose--Einstein condensation} \label{section proof bec}
	In this section we shall use the results obtained in the previous sections in order to prove the occurrence of BEC in the interacting Bose gas in the Kac--Luttinger model for suitably scaled two-particle interactions.	
	In particular, we will show that the ground state of \eqref{N particle Hamiltonian}, denoted as $\Psi_N^\omega$, exhibits Bose--Einstein condensation in $u_N^{\widetilde k,\omega}$.
		Recall that $u_N^{\widetilde k, \omega}$ is a minimizer of the Hartree-type functional \eqref{HartreeFunctional componentwise} on the component $\Lambda_N^{\widetilde k_N^{\omega}}$ where the normalized eigenfunction corresponding to the lowest eigenvalue of $-\Delta$ has its support, see Definition~\ref{definition widetilde j N omega}.
	
	We define the one-particle density matrix associated with $\Psi_N^\omega$  as the nonnegative trace class operator on $L^2(\Lambda_N^\omega)$ with integral kernel
	\begin{equation}
	\varrho^{(1),\omega} (x;y) = \int dx_2 \dots dx_N \, \Psi^\omega_{N} (x , x_2, \dots , x_N) \overline{\Psi}^\omega_{N} (y,x_2, \dots , x_N)  \ ,
	\end{equation}
	which is normalized so that $\text{tr}\varrho^{(1),\omega}=1$ (see \cite{PO56,M07}). 
	We recall that the expectation of the number of particles occupying the state $u_N^{\widetilde k,\omega}$ is given by 
	\begin{equation}
	n_N^{\omega} \coloneqq N \cdot \mathrm{tr}(\varrho^{(1),\omega}\lvert u_N^{\widetilde k,\omega} \rangle \langle u_N^{\widetilde k,\omega} \lvert)
	\end{equation}
	for all $N \in \mathds N$, any $0 < \eta < \e^{-\nu \omega_d r^d}$, and all $\omega \in \Omega_N^{(1),\eta}$.  We set $n_N^{\widetilde k_N^{\omega}, \omega} \coloneqq 0$ for all $N \in \mathds N$, any $0 < \eta < \e^{-\nu \omega_d r^d}$, and all $\omega \notin \Omega_N^{(1),\eta}$.
	
	\begin{theorem} \label{theorem proof of BEC I}
		Let an arbitrary $0 < \eta < \e^{-\nu \omega_d r^d}$, $N \in \mathds N$, and $\omega \in \Omega_N^{(2),\eta}$ be given. Let $v_N$ be given with its Fourier transform $\widehat v_N \ge 0$. Let $E_{\text{QM},N}^{1,\omega}$ be the ground state energy of $H_{N}^{\omega}$, see \eqref{N particle Hamiltonian}, and $\honeparticlehamiltonianEigenvalues^{1, \widetilde u, \omega}$ the minimum of the Hartree-type functional, see \eqref{HartreeFunctional componentwise}. We then have
        \begin{equation}\label{EstimateGSenergy}
		\left|\frac{E_{\text{QM},N}^{1,\omega}}{N} - \honeparticlehamiltonianEigenvalues^{1, \widetilde u, \omega} \right| \leq \dfrac{v_N(0)}{2}
		\end{equation}
		and
		\begin{equation}\label{EstimateBesetzungszahl}
		1-\frac{n_N^{\omega}}{N} \leq \frac{v_N(0)}{2}\cdot \frac{1}{\honeparticlehamiltonianEigenvalues^{2, \widetilde u, \omega} - \honeparticlehamiltonianEigenvalues^{1, \widetilde u, \omega}}  \ ,
		\end{equation}
		where $v_N(0)=(2\pi)^{-d/2}\|\widehat v_N\|_{1}$.
	\end{theorem}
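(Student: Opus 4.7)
The plan is to sandwich $E_{\text{QM},N}^{1,\omega}/N$ between $\varepsilon_N^{1,\widetilde k,\omega}$ and $\varepsilon_N^{1,\widetilde k,\omega} - v_N(0)/2$, and then upgrade the lower bound using the spectral gap of $\honeparticlehamiltonian^{\widetilde u,\omega}$ to control the depletion.

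For the upper bound on the ground state energy I would use the Hartree trial state $\Psi_{\text{trial}}(x_1,\dots,x_N) := \prod_{j=1}^N u_N^{\widetilde k,\omega}(x_j)$, which belongs to $L^2_{\text{s}}((\Lambda_N^\omega)^N)$ and is normalized. A direct calculation yields $\langle \Psi_{\text{trial}}, H_N^\omega \Psi_{\text{trial}}\rangle = N \int|\nabla u_N^{\widetilde k,\omega}|^2 + \binom{N}{2}\int\!\!\int v_N(x-y)|u_N^{\widetilde k,\omega}(x)|^2|u_N^{\widetilde k,\omega}(y)|^2 = N\,\mathcal E_N^{\widetilde k,\omega}[u_N^{\widetilde k,\omega}] = N\varepsilon_N^{1,\widetilde k,\omega}$, and by Proposition~\ref{Proposition ground state energy hNk equals componentwise hartree functional} together with Lemma~\ref{lemma groundstate of h equals lowest componentwise h ground state}, this equals $N\honeparticlehamiltonianEigenvalues^{1,\widetilde u,\omega}$.

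The main ingredient for the lower bound is the operator inequality
\begin{equation}
H_N^\omega \;\ge\; \sum_{j=1}^N \honeparticlehamiltonian^{\widetilde u,\omega}\big|_{x_j} \;-\; \frac{N v_N(0)}{2},
\end{equation}
which I would derive from positive-definiteness of $v_N$ by applying $\int\!\!\int v_N(x-y) f(x) f(y)\,\mathrm d x\,\mathrm d y \ge 0$ to the signed measure $f := (N-1)|u_N^{\widetilde k,\omega}|^2 - \sum_{j=1}^N \delta_{x_j}$. Expanding the square gives
\begin{equation}
\sum_{i,j} v_N(x_i-x_j) \;\ge\; 2(N-1)\sum_j (|u_N^{\widetilde k,\omega}|^2 * v_N)(x_j) \;-\; (N-1)^2 \int\!\!\int v_N|u_N^{\widetilde k,\omega}|^2|u_N^{\widetilde k,\omega}|^2,
\end{equation}
and removing the diagonal via $2\sum_{i<j} v_N(x_i-x_j) = \sum_{i,j} v_N(x_i-x_j) - N v_N(0)$, combined with $\int\!\!\int v_N |u_N^{\widetilde k,\omega}|^2|u_N^{\widetilde k,\omega}|^2 \ge 0$, produces the displayed operator inequality with the stated $Nv_N(0)/2$ error. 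Since $\sum_j \honeparticlehamiltonian^{\widetilde u,\omega}\big|_{x_j} \ge N\honeparticlehamiltonianEigenvalues^{1,\widetilde u,\omega}$ as operators on the symmetric $N$-fold tensor space, this yields $E_{\text{QM},N}^{1,\omega}/N \ge \honeparticlehamiltonianEigenvalues^{1,\widetilde u,\omega} - v_N(0)/2$, which together with the upper bound establishes \eqref{EstimateGSenergy}.

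For the condensation estimate \eqref{EstimateBesetzungszahl} I would sharpen the one-body bound by a gap argument. Writing $P := \ket{u_N^{\widetilde k,\omega}}\bra{u_N^{\widetilde k,\omega}}$ and $Q := 1 - P$, the spectral theorem yields $\honeparticlehamiltonian^{\widetilde u,\omega} \ge \honeparticlehamiltonianEigenvalues^{1,\widetilde u,\omega} + (\honeparticlehamiltonianEigenvalues^{2,\widetilde u,\omega} - \honeparticlehamiltonianEigenvalues^{1,\widetilde u,\omega}) Q$ (with the gap positive by Corollary~\ref{corollary unique ground state of h defined on all components}). Taking expectation of $\sum_j \honeparticlehamiltonian^{\widetilde u,\omega}|_{x_j}$ in $\Psi_N^\omega$ and using that symmetry of $\Psi_N^\omega$ gives $\sum_j \langle P\rangle_{x_j} = n_N^\omega$, I would get
\begin{equation}
E_{\text{QM},N}^{1,\omega} \;\ge\; N\honeparticlehamiltonianEigenvalues^{1,\widetilde u,\omega} + (\honeparticlehamiltonianEigenvalues^{2,\widetilde u,\omega}-\honeparticlehamiltonianEigenvalues^{1,\widetilde u,\omega})(N - n_N^\omega) - \tfrac{N v_N(0)}{2}.
\end{equation}
Combining with the upper bound $E_{\text{QM},N}^{1,\omega} \le N\honeparticlehamiltonianEigenvalues^{1,\widetilde u,\omega}$ and dividing by $N(\honeparticlehamiltonianEigenvalues^{2,\widetilde u,\omega}-\honeparticlehamiltonianEigenvalues^{1,\widetilde u,\omega})$ gives \eqref{EstimateBesetzungszahl}. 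The main technical obstacle is getting the operator inequality with the precise constant $v_N(0)/2$: the $-v_N(0)/2$ loss is exactly the cost of passing from $\sum_{i,j}$ to $\sum_{i<j}$ on the diagonal, and one must be careful that the remaining $(N-1)B/2$ residual from the positive-definiteness expansion has the correct sign so that it can simply be dropped.
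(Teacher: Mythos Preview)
Your proposal is correct and follows essentially the same route as the paper: a Hartree product trial state for the upper bound, the positive-definiteness (Onsager-type) inequality for the lower bound yielding $H_N^\omega \ge \sum_j (\honeparticlehamiltonian^{\widetilde u,\omega})_j - Nv_N(0)/2$, and then the spectral-gap estimate $\honeparticlehamiltonian^{\widetilde u,\omega}\ge \honeparticlehamiltonianEigenvalues^{1,\widetilde u,\omega}P+\honeparticlehamiltonianEigenvalues^{2,\widetilde u,\omega}Q$ applied to the one-particle density matrix. The only cosmetic difference is that the paper quotes the completing-the-square inequality from \cite[Lemma~3.3]{lewin2015} with the choice $\xi=\sqrt{N(N-1)}\,|u_N^{\widetilde k,\omega}|^2$, whereas you derive it directly with $\xi=(N-1)|u_N^{\widetilde k,\omega}|^2$ and explicitly discard the nonnegative $(N-1)B/2$ residual; both lead to the same operator bound.
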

	
	\begin{proof}
		Let an arbitrary $0 < \eta < \e^{-\nu \omega_d r^d}$, $N \in \mathds N$, and $\omega \in \Omega_N^{(2),\eta}$ be given. For any $\xi \in L^1(\mathds{R}^d)$ we have \cite[Lemma~3.3]{lewin2015}
		\begin{equation}
		\sum_{1\leq i < j \leq N}v_N(x_i-x_j) \geq \sum_{j=1}^{N} (\xi \ast v_N)(x_j) - \frac{1}{2}\int\limits_{\mathds R^d} \int\limits_{\mathds R^d}v_N(x-y)\xi(x)\xi(y)\ \ud x \ud y - N \dfrac{v_N(0)}{2} \ .
		\end{equation}
		Setting $\xi(x)\coloneqq \sqrt{N(N-1)} |u_N^{\widetilde k, \omega}(x)|^2$ (understanding $u_N^{\widetilde k, \omega}$ to be extended by zero to $\Lambda_N^{\omega}$), we thus obtain
		\begin{equation}
		\begin{split}
		H_N^{\omega} \geq& \sum_{j=1}^{N}\Bigg(-\Delta_{j} + (N-1) (|u_N^{\widetilde k, \omega}|^2 \ast v_N)(x_j) \Bigg)\\
       & \qquad - \frac{N-1}{2}\int\limits_{\Lambda_N^{\omega}} \int\limits_{\Lambda_N^{\omega}} v_N(x-y)|u_N^{\widetilde k, \omega}(x)|^2|u_N^{\widetilde k, \omega}(y)|^2\ \ud x \ud y \Bigg) - N\frac{v_N(0)}{2} 
		\end{split}
		\end{equation}
		as an operator inequality on $L_{\text{s}}^2((\Lambda^{\omega}_{N})^N)$ and, with definition \eqref{definition klein widetilde h N u k omega} of $\honeparticlehamiltonian^{\widetilde u, \omega}$,
		\begin{equation}
		H_N^{\omega} \geq \sum_{j=1}^{N}\Big(\honeparticlehamiltonian^{\widetilde u, \omega}\Big)_j - N\frac{v_N(0)}{2} \ .
		\end{equation}
		Therefore,
		\begin{equation}\label{eq:lb}
		\frac{E_{\text{QM},N}^{1,\omega}}{N}\geq \mathrm{tr}(\varrho^{(1),\omega} \honeparticlehamiltonian^{\widetilde u, \omega})- \frac{v_N(0)}{2} \ .
		\end{equation}
		Recall here that $E_{\text{QM},N}^{1,\omega}$ is the lowest eigenvalue of the $N$-particle Hamiltonian $H_N^{\omega}$ defined in \eqref{N particle Hamiltonian}. 		
		Moreover, we have the upper bound
		\begin{align}\label{eq:ub}
		\begin{split}
		E_{\text{QM},N}^{1,\omega} & \le \braket{u_N^{\widetilde k, \omega} \otimes \ldots \otimes u_N^{\widetilde k, \omega}}{H_N^{\omega} u_N^{\widetilde k, \omega} \otimes \ldots \otimes u_N^{\widetilde k, \omega}} \\
		& = N \varepsilon_N^{1, \widetilde k, \omega} = N \honeparticlehamiltonianEigenvalues^{1, \widetilde u, \widetilde k, \omega} = N \honeparticlehamiltonianEigenvalues^{1,\widetilde u, \omega} \ ,
		\end{split}
		\end{align}
		where we used Proposition~\ref{Proposition ground state energy hNk equals componentwise hartree functional} and Lemma~\ref{lemma groundstate of h equals lowest componentwise h ground state}. Combining \eqref{eq:lb}, \eqref{eq:ub} and using that
		\begin{equation}
		\mathrm{tr}(\varrho^{(1),\omega} \honeparticlehamiltonian^{\widetilde u, \omega})\geq \honeparticlehamiltonianEigenvalues^{1, \widetilde u, \omega} \ ,
		\end{equation}
		inequality \eqref{EstimateGSenergy} follows.
		
		To prove \eqref{EstimateBesetzungszahl}, we observe that with \eqref{eq:lb} and \eqref{eq:ub} we obtain
		
		\begin{equation}
		\honeparticlehamiltonianEigenvalues^{1, \widetilde u, \omega} \geq \frac{n_N^{\omega}}{N} \honeparticlehamiltonianEigenvalues^{1, \widetilde u, \omega} + \left(1-\frac{n_N^{\omega}}{N}\right) \honeparticlehamiltonianEigenvalues^{2, \widetilde u, \omega} - \dfrac{v_N(0)}{2}
		\end{equation}
		and, also using Corollary~\ref{corollary unique ground state of h defined on all components},
		\begin{equation}
		1-\frac{n_N^{\omega}}{N} \leq \dfrac{v_N(0)}{2}\cdot \frac{1}{\honeparticlehamiltonianEigenvalues^{2, \widetilde u, \omega} - \honeparticlehamiltonianEigenvalues^{1, \widetilde u, \omega}} \ .
		\end{equation}
	\end{proof}
	
	We now state and prove our main result, namely, the occurrence of BEC in probability or with probability almost one under certain conditions for the pair-interaction $v_N$.
	
	\begin{theorem}[BEC] \label{theorem proof of BEC II} Suppose $0 < \eta < \e^{-\nu \omega_d r^d}$, and let $v_N$ together with its Fourier transform $\widehat v_N \ge 0$ for all $N \in \mathds N$ be given.
	\begin{enumerate}[(i)]
	\item For any $\varepsilon > 0$ there exists a constant $\kappa > 0$ such that if $\|v_N\|_1 \le \kappa N^{-1} (\ln N)^{-2/d}$ for all but finitely many $N \in \mathds N$ and $v_N(0)\ll (\ln N)^{-(1+2/d)}$, we have for any $\zeta>0$
		\begin{equation}
		\liminf\limits_{N \to \infty} \mathds P \left(\left| \dfrac{ n_N^{\omega}}{N} - 1 \right| < \zeta \right) \ge 1 - \varepsilon \ .
		\end{equation}
		This means, there is complete BEC with probability almost one into a minimizer of the Hartree-type functional \eqref{HartreeFunctional componentwise}.
		\item If $\|v_N\|_1  \ll N^{-1} (\ln N)^{-2/d}$ and $v_N(0)\ll (\ln N)^{-(1+2/d)}$, where $v_N(0)=(2\pi)^{-d/2}\|\widehat v_N\|_{1}$ and $\widehat v_N$ is the Fourier transform of $v_N$, then for all $\zeta > 0$ we have
		\begin{equation}
 		\lim\limits_{N \to \infty} \mathds P \left(\left| \dfrac{ n_N^{\omega}}{N} - 1 \right| < \zeta \right) = 1 \ ,
        \end{equation}
        that is, there is complete BEC in probability into a minimizer of the Hartree-type functional \eqref{HartreeFunctional componentwise}.
        \end{enumerate}
	\end{theorem}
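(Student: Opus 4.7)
The plan is to combine the quantitative depletion bound \eqref{EstimateBesetzungszahl} from Theorem~\ref{theorem proof of BEC I} with the spectral-gap control coming from Proposition~\ref{proposition estimate gap} and Proposition~\ref{proposition probability omegaN2 converges to one}, and then read off that the right-hand side of \eqref{EstimateBesetzungszahl} tends to zero on an event of probability either close to one (case (i)) or converging to one (case (ii)). Concretely, for any $\omega \in \Omega_N^{(2),\eta}$, Theorem~\ref{theorem proof of BEC I} yields
\begin{equation*}
1 - \frac{n_N^{\omega}}{N} \leq \frac{v_N(0)}{2}\cdot\frac{1}{\honeparticlehamiltonianEigenvalues^{2,\widetilde u,\omega} - \honeparticlehamiltonianEigenvalues^{1,\widetilde u,\omega}} \ ,
\end{equation*}
so the whole task reduces to showing that the Hartree spectral gap $\honeparticlehamiltonianEigenvalues^{2,\widetilde u,\omega} - \honeparticlehamiltonianEigenvalues^{1,\widetilde u,\omega}$ is bounded below by a quantity of order $(\ln N)^{-(1+2/d)}$ on a suitable high-probability event.

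For this gap, I would work on the intersection of three events: $\Omega_N^{(1),\eta}$, the event $\Omega_N^{(3),\eta} = \{\freelaplacianEigenvalueOne \le C_2 (\ln N)^{-2/d}\}$ from the proof of Proposition~\ref{proposition probability omegaN2 converges to one}, and the Sznitman event $\mathcal S_N^{\sigma} \coloneqq \{\freelaplacianEigenvalueTwo - \freelaplacianEigenvalueOne \ge \sigma (\ln N)^{-(1+2/d)}\}$. Proposition~\ref{proposition estimate gap} then gives
\begin{equation*}
\honeparticlehamiltonianEigenvalues^{2,\widetilde u,\omega} - \honeparticlehamiltonianEigenvalues^{1,\widetilde u,\omega} \ge \sigma (\ln N)^{-(1+2/d)} - C_1^2 N\|v_N\|_1 (C_2)^{d/2} (\ln N)^{-1} \ ,
\end{equation*}
and inserting the hypothesis $\|v_N\|_1 \le \kappa N^{-1}(\ln N)^{-2/d}$ collapses the second term to $C_1^2 C_2^{d/2}\kappa (\ln N)^{-(1+2/d)}$. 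Choosing $\kappa$ strictly smaller than $\sigma C_1^{-2}C_2^{-d/2}$, say $\kappa \le \tfrac{1}{2}\sigma C_1^{-2}C_2^{-d/2}$, gives a lower bound $\tfrac{\sigma}{2}(\ln N)^{-(1+2/d)}$ on the Hartree gap, so that
\begin{equation*}
1 - \frac{n_N^{\omega}}{N} \le \frac{v_N(0)\,(\ln N)^{1+2/d}}{\sigma} \ ,
\end{equation*}
which by the assumption $v_N(0)\ll (\ln N)^{-(1+2/d)}$ is smaller than $\zeta$ for all sufficiently large $N$. In particular, on that event $|n_N^\omega/N - 1| < \zeta$ eventually.

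To finish, I would handle the probabilities. For case (ii), Proposition~\ref{proposition probability omegaN2 converges to one}(ii) applies with any sequence $\sigma_N\to 0$ and $\mathds P(\Omega_N^{(3),\eta})\to 1$, so $\mathds P(\Omega_N^{(2),\eta}\cap \Omega_N^{(3),\eta})\to 1$, and the above bound gives $\mathds P(|n_N^\omega/N-1|<\zeta)\to 1$. For case (i), given $\varepsilon>0$ I would first pick $\sigma>0$ small enough that $\liminf_N\mathds P(\mathcal S_N^\sigma)\ge 1-\varepsilon/2$ by~\eqref{Equation Sznitman Theorem 6.1}, then fix $\kappa \le \tfrac{1}{2}\sigma C_1^{-2}C_2^{-d/2}$; since $\mathds P(\Omega_N^{(1),\eta}\cap \Omega_N^{(3),\eta})\to 1$, the intersection has $\liminf$-probability at least $1-\varepsilon$.

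I expect the only real subtlety to be the book-keeping of constants in choosing $\kappa$ small enough so that the Hartree perturbation does not close the Dirichlet gap, i.e., ensuring that the same $\kappa$ simultaneously forces both $\Omega_N^{(2),\eta}$ to have large probability (as in Proposition~\ref{proposition probability omegaN2 converges to one}) and the residual gap $\sigma - C_1^2 C_2^{d/2}\kappa$ to be strictly positive. Everything else is a direct substitution of the deterministic estimates from Sections~\ref{section model}--\ref{section hartree functional} into \eqref{EstimateBesetzungszahl}, followed by a union-bound argument on the probabilities.
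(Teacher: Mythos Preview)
Your proposal is correct and follows essentially the same approach as the paper: combine the depletion bound \eqref{EstimateBesetzungszahl} with Proposition~\ref{proposition estimate gap}, restrict to the intersection of $\Omega_N^{(2),\eta}$, the event $\{\freelaplacianEigenvalueOne \le C_2(\ln N)^{-2/d}\}$, and the Sznitman gap event, and then do a union bound. The paper's proof is organized identically; the only cosmetic difference is that for case~(ii) the paper explicitly picks a sequence $\sigma_N\to 0$ with $\sigma_N \gg v_N(0)(\ln N)^{1+2/d}$ and $\sigma_N > (1-\tilde\varepsilon)^{-1}C_1^2 C_2^{d/2}N(\ln N)^{2/d}\|v_N\|_1$, which you should make explicit when you write out the details.
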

	\begin{proof} 
		Let an $0 < \eta < \e^{-\nu \omega_d r^d}$, $N \in \mathds N$, and $\omega \in \Omega_N^{(2),\eta}$ be given. Then by Theorem~\ref{theorem proof of BEC I} and Proposition~\ref{proposition estimate gap}, we have
		\begin{equation} \label{Equation Theorem 4.2 asdasdasd234glk}
		1 - \dfrac{n_N^{\omega}}{N} \le \dfrac{v_N(0)}{2}\cdot \frac{1}{\freelaplacianEigenvalueTwo - \freelaplacianEigenvalueOne - C_1^2 N \|v_N\|_1 (\freelaplacianEigenvalueOne)^{d/2}}
		\end{equation}
		
		Note that for the gap between the two lowest eigenvalues $\freelaplacianEigenvalueOne$ and $\freelaplacianEigenvalueTwo$ of the Dirichlet Laplacian on $\Lambda_N^{\omega}$ \cite[Theorem~6.1]{SZNITMAN2023104197} one has
		\begin{equation} \label{Equation Theorem 4.2 asdopizxcio}
		\lim\limits_{\sigma \to 0} \liminf\limits_{N \to \infty} \mathds P \left( \freelaplacianEigenvalueTwo - \freelaplacianEigenvalueOne \ge \sigma (\ln N)^{-(1+2/d)} \right) = 1 \ .
		\end{equation}
		In addition, recall that $\lim_{N \to \infty} \mathds P ( \Omega_N^{(3)} ) = 1$ where 
		\begin{equation}
		 \Omega_N^{(3),\eta} = \left\{ \omega \in \Omega_N^{(1),\eta} :  e_N^{1,\omega} \le C_2 (\ln N)^{-2/d} \right\} \ . 
		\end{equation}
		We define $C_3 \coloneqq C_1^2 C_2^{d/2}$.

		We firstly discuss the case $(i)$. Let an arbitrary $\varepsilon > 0$ be given. By \eqref{Equation Theorem 4.2 asdopizxcio}, there exists a $\sigma > 0$ such that for all but finitely many $N \in \mathds N$ we have $\mathds P(\Omega_N^{(4),\eta,\sigma}) \ge 1 - \varepsilon/2$ where
		\begin{equation}
		\Omega_N^{(4),\eta,\sigma} \coloneqq \left\{ \omega \in \Omega_N^{(1), \eta} :  \freelaplacianEigenvalueTwo - \freelaplacianEigenvalueOne \ge \sigma (\ln N)^{-(1+2/d)} \right\} \ , \quad N \in \mathds N \ .
		\end{equation}
		Therefore, if $\|v_N\|_1 < \sigma C_3^{-1} N^{-1} (\ln N)^{-2/d}$ for all but finitely many $N \in \mathds N$ and $v_N(0)\ll (\ln N)^{-(1+2/d)}$, we have
		\begin{align} 
		& \, \mathds P \left( \left| \dfrac{ n_N^{\omega}}{N} - 1 \right| < \zeta  \right)\\
		\ge \, & \mathds P \left(  \left| \dfrac{ n_N^{\omega}}{N} - 1 \right| \le \dfrac{v_N(0)}{2[\sigma (\ln N)^{-(1 + 2/d)} - C_3 N \|v_N\|_1 (\ln N)^{-1}] } \right) \\
		\ge \, & \mathds P \left( \left\{ \omega \in \Omega_N^{(2),\eta} : 1 - \dfrac{n_N^{\omega}}{N} \le \dfrac{v_N(0)}{2[\freelaplacianEigenvalueTwo - \freelaplacianEigenvalueOne - C_1^2 N \|v_N\|_1 (e_N^{1,\omega})^{d/2}] } \right\} \right. \\
		& \qquad \cap \left. \Omega_N^{(3),\eta} \cap \Omega_N^{(4),\eta,\sigma} \right) \\
		= \, & \mathds P \left( \Omega_N^{(2),\eta} \cap \Omega_N^{(3),\eta} \cap \Omega_N^{(4),\eta,\sigma} \right) \\
		\ge \, & \mathds P ( \Omega_N^{(2),\eta}) + \mathds P( \Omega_N^{(3),\eta}) + \mathds P(\Omega_N^{(4),\eta,\sigma} ) - 2
		\end{align}
		for any $\zeta > 0$ and for all but finitely many $N \in \mathds N$. Note that we used \eqref{Equation Theorem 4.2 asdasdasd234glk} for the last step. By Proposition~\ref{proposition probability omegaN2 converges to one}, there exists a $\kappa$ such that if $\|v_N\|_1 \le \kappa N^{-1} (\ln N)^{-2/d}$ for all but finitely many $N \in \mathds N$, then $\mathds P(\Omega_N^{(2),\eta}) \ge 1 - \epsilon/2$ for all but finitely many $N \in \mathds N$. Therefore, for any $\zeta > 0$ we have
		\begin{equation}
		\liminf\limits_{N \to \infty} \mathds P \left( \left| \dfrac{ n_N^{\omega}}{N} - 1 \right| < \zeta  \right) \ge 1 - \varepsilon \ .
		\end{equation}
		
		Lastly, for the case $(ii)$ we conclude from \eqref{Equation Theorem 4.2 asdopizxcio} that
		$\lim_{N \to \infty} \mathds P(\Omega_N^{(4),\eta, \sigma_N}) = 1$ where
		\begin{equation}
		\Omega_N^{(4),\eta, \sigma_{N \in \mathds N}} \coloneqq \left\{ \omega \in \Omega_N^{(1), \eta} :  \freelaplacianEigenvalueTwo - \freelaplacianEigenvalueOne \ge \sigma_N (\ln N)^{-(1+2/d)} \right\} \ , \quad N \in \mathds N
		\end{equation}
		and $(\sigma_N)_{N \in \mathds N}$ is an arbitrary sequence that converges to zero. Therefore, for any $\varepsilon > 0$ and any sequence $\sigma_{N \in \mathds N}$ that converges to zero such that $\sigma_N \gg v_N(0) (\ln N)^{1+2/d}$ and for which we have, for some $0 < \tilde \varepsilon <1$,
		\begin{equation}
		\sigma_N > (1-\tilde \varepsilon)^{-1} C_3 (\ln N)^{2/d} N \|v_N\|_1
		\end{equation}
		for all but finitely many $N \in \mathds N$, we conclude, similarly as above,
		\begin{align} 
		& \, \mathds P \left( \left| \dfrac{ n_N^{\omega}}{N} - 1 \right| < \zeta  \right)\\
		\ge \, & \mathds P \left(  \left| \dfrac{ n_N^{\omega}}{N} - 1 \right| \le \dfrac{v_N(0)}{2[\sigma_N (\ln N)^{-(1 + 2/d)} - C_3 N \|v_N\|_1 (\ln N)^{-1}] } \right) \\
		= \, & \mathds P \left( \Omega_N^{(2),\eta} \cap \Omega_N^{(3),\eta} \cap \Omega_N^{(4),\eta,(s_N)_{N \in \mathds N}} \right)
		\end{align}
		for any $\zeta > 0$ and all but finitely many $N \in \mathds N$.
		Since the right side of this inequality now converges to one in the limit $N \to \infty$, see also Proposition~\ref{proposition probability omegaN2 converges to one}, we have shown that for all $\zeta > 0$,
		\begin{equation}
		\lim\limits_{N \to \infty} \mathds P \left( \left| \dfrac{ n_N^{\omega}}{N} - 1 \right| < \zeta \right) = 1 \ .
		\end{equation}
	\end{proof}
	
	\begin{remark} It is possible to relax the assumption of $v_N(0) \ll (\ln N)^{-(1+2/d)}$ to $v_N(0) \le c_1 (\ln N)^{-(1+2/d)}$ for all but finitely many $N \in \mathds N$ for a sufficiently small constant $c_1 > 0$ and still conclude, similarly as in the proof of Theorem~\ref{theorem proof of BEC II} that
	for a certain constant $c_2 > 0$ and for any $\zeta>0$ and $\varepsilon > 0$,
	\begin{equation}
		\liminf\limits_{N \to \infty} \mathds P \left(\left| \dfrac{ n_N^{\omega}}{N} - c_2 \right| < \zeta \right) \ge 1 - \varepsilon
    \end{equation}
    and
    \begin{equation}
		\lim\limits_{N \to \infty} \mathds P \left(\left| \dfrac{ n_N^{\omega}}{N} - c_2 \right| < \zeta \right) \ge 1 \ ,
    \end{equation}
	respectively. That is, one can still show the occurrence of BEC with probability almost one or in probability, although the condensation may not be complete anymore.
	\end{remark}

	\begin{remark}
		It is interesting to compare Theorem~\ref{theorem proof of BEC II} with [Theorem~4.2,\cite{kerner_pechmann_2023}] which makes a statement about the absence of BEC for suitably scaled repulsive two-particle interaction, at positive temperatures $T > 0$ (for completeness, one should mention that the authors focus in \cite{kerner_pechmann_2023} on the nonpercolation regime, which means that the intensity of the Poisson point process is chosen large enough). Assuming a potential $v_N(x):=w_N(\|x\|)$ with $w_N:\mathds{R} \rightarrow [0,\infty)$ such that
		\begin{equation}
		w_N(\|x\|) \ge b_N \quad \text{for} \quad \|x\| \le a_N \ ,
		\end{equation}
		it has been proved in \cite{kerner_pechmann_2023} that no one-particle state supported only on a single component (such as the minimizer of the Hartree-type functional considered in Section~\ref{section hartree functional}) is almost surely not macroscopically occupied if
		\begin{equation}
		\lim_{N \rightarrow \infty} \frac{b_N (a_N)^{3d} N}{(\ln N)^3}=\infty \quad \text{ and } \quad \lim_{N \rightarrow \infty} \frac{(a_N)^{3d} N}{( \ln N)^3}=\infty \ ,
		\end{equation}
		where $(a_N)_{N \in \mathds N}$ is a bounded sequence, as well as $\|v_N\|_1 \ll (\ln N)^{-2}$ (the last assumption was needed in \cite{kerner_pechmann_2023} to ensure that the physical system is well defined). Fixing $a_N = const.$ for all $N \in \mathds N$, for example, an expected regime for which BEC into a localized state is therefore possible only if $b_N \lesssim (\ln N)^3/N$. So whenever  $\|v_N\|_1 \sim b_N$, the condition on $v_N$ as formulated in Theorem~\ref{theorem proof of BEC II} seem quite close to being optimal. On the other hand, there is still some intermediate scaling regime that needs to be addressed in the future.
		
		Also, for us it seems possible that for stronger two-particle interactions, there might still be a macroscopic occupation of a one-particle state but of one that is not too localized.
	\end{remark}
	
	\begin{remark}
	It is also interesting to compare the result of Theorem \ref{theorem proof of BEC II} with related ones regarding BEC in nonrandom models. In the case of interacting bosons trapped in a region of order one, referring to a region independent of $N$, BEC has been proved to occur in mean-field models where the interaction scales as $v_N \sim N^{-1}V(x)$ where $V$ is a nonnegative function $V: \mathds R^d \to \mathds R$ independent of $N$ \cite{GrechSeiringer2013, lewin2015}. In addition, BEC is also present in the so-called Gross-Pitaevskii regime where, in three dimensions, the potential scales as $v_N \sim N^{2}V(Nx)$; here $V: \mathds R^3 \to \mathds R$ is again a nonnegative function independent of $N$  \cite{LiebSeiringer2002, lieb2005mathematics}. In both settings, the condensate wave function is a minimizer of a suitable one-particle functional similar to \eqref{HartreeFunctional componentwise}. 
	 
	 However, since the particles in our system are not confined in a region of order one, it may be better to compare the interaction strength of our system to the interaction strengths of these nonrandom models after rescaling the lengths accordingly. It may be reasonable to scale to a system where the particles are confined in a region $\Lambda_N= (-L_N/2,+L_N/2)^d$ where $L_N = \rho^{-1} N^{1/d}$. On the other hand, since the volume of the largest component in our system is at most of order $\ln N$ (at least in the nonpercolation regime), it may be more appropriate to compare our system to a nonrandom one where the particles are confined in a region $\Lambda_N= (-L_N/2,+L_N/2)^d$ where $L_N = (const.) (\ln N)^{1/d}$.

	 Lastly, an effect of randomness in our interacting model is the localization of the condensate wave function in a relatively small region, which is determined by the lowest eigenfunction of the Laplacian. In this context, we also would like to refer the reader to another nonrandom model studied in \cite{RouSpeh2023} where bosons separated by a double well potential display a localized regime.
	\end{remark}
	 
	\subsubsection*{Acknowledgements} We would like to thank the Mathematical Institute Oberwolfach (MFO) for hospitality during the Mini-Workshop \textit{A Geometric Fairytale full of Spectral Gaps and Random Fruit}. We also thank the other participants for interesting and stimulating discussions. In particular, we would like to thank Alain-Sol Sznitman as well as Christian Brennecke and Serena Cenatiempo for stimulating discussions on Bose--Einstein condensation. C. B. gratefully acknowledges funding from the Italian Ministry of University and Research (MIUR) through the PRIN 2022 project PRIN202223CBOCC\_01, project code 2022AKRC5P.
	
	{\small
		\bibliographystyle{amsalpha}
		\bibliography{Literature}}

\providecommand{\bysame}{\leavevmode\hbox to3em{\hrulefill}\thinspace}
\providecommand{\MR}{\relax\ifhmode\unskip\space\fi MR }
\providecommand{\MRhref}[2]{%
  \href{http://www.ams.org/mathscinet-getitem?mr=#1}{#2}
}
\providecommand{\href}[2]{#2}
\begin{thebibliography}{BBCS19}

\bibitem[BBCS19]{BBCS2019}
C.~Boccato, C.~Brennecke, S.~Cenatiempo, and B.~Schlein, \emph{{Bogoliubov
  theory in the {G}ross--{P}itaevskii limit}}, {Acta Math.} \textbf{222}
  ({2019}), 219--335.

\bibitem[Bos24]{Bose24}
S.~N. Bose, \emph{{Plancks Gesetz und Lichtquantenhypothese}}, Z. Phys. (1924),
  178--181.

\bibitem[Ein24]{EinsteinBECI}
A.~Einstein, \emph{Quantentheorie des einatomigen idealen {G}ases}, Sitzber.
  {K}gl. {P}reuss. {A}kadm. {W}iss. (1924), 261--267.

\bibitem[Ein25]{EinsteinBECII}
\bysame, \emph{Quantentheorie des einatomigen idealen {G}ases. {Z}weite
  {A}bhandlung}, Sitzber. {K}gl. {P}reuss. {A}kadm. {W}iss. (1925), 3--14.

\bibitem[GP75]{GredeskulPastur75}
S.~A. Gredeskul and L.~A. Pastur, \emph{{Behavior of the density of states in
  one-dimensional disordered systems near the edges of the spectrum}},
  Theor.~Math.~Phys. \textbf{23} (1975), 404--409.

\bibitem[GS13]{GrechSeiringer2013}
P.~Grech and R.~Seiringer, \emph{{The excitation spectrum for weakly
  interacting bosons in a trap}}, {Commun. Math. Phys.} \textbf{322} ({2013}),
  559--591.

\bibitem[KL73]{kac1973bose}
M.~Kac and J.~M. Luttinger, \emph{{{B}ose--{E}instein condensation in the
  presence of impurities}}, J.~Math.~Phys. \textbf{14} (1973), no.~11,
  1626--1628.

\bibitem[KL74]{kac1974bose}
\bysame, \emph{{{B}ose--{E}instein condensation in the presence of impurities.
  II}}, J.~Math.~Phys. \textbf{15} (1974), no.~2, 183--186.

\bibitem[KP21]{kerner2021effect}
J.~Kerner and M.~Pechmann, \emph{On the effect of repulsive pair interactions
  on {B}ose--{E}instein condensation in the {L}uttinger--{S}y model},
  Proceedings of the American Mathematical Society \textbf{149} (2021), no.~8,
  3499--3513.

\bibitem[KP23]{kerner_pechmann_2023}
\bysame, \emph{Bose--{E}instein condensation for particles with repulsive
  short-range pair interactions in a {P}oisson random external potential in
  $\mathbb{R}^{d}$}, Journal of Applied Probability \textbf{60} (2023), no.~2,
  382--393.

\bibitem[KPS20]{KERNER2020287}
J.~Kerner, M.~Pechmann, and W.~Spitzer, \emph{On a condition for type-{I}
  {B}ose--{E}instein condensation in random potentials in d dimensions},
  Journal de Mathématiques Pures et Appliquées \textbf{143} (2020), 287--310.

\bibitem[KTY23]{kerner2023mini}
J.~Kerner, M.~T{\"a}ufer, and P.~Yatsyna, \emph{Mini-{W}orkshop: A {G}eometric
  {F}airytale full of {S}pectral {G}aps and {R}andom {F}ruit}, Oberwolfach
  Reports \textbf{19} (2023), no.~4, 3051--3092.

\bibitem[Lew15]{lewin2015}
M.~Lewin, \emph{{Mean-field limit of Bose systems: rigorous results}},
  {Proceedings of the International Congress of Mathematical Physics, Santiago
  de Chile} ({2015}).

\bibitem[LL01]{lieb2001analysis}
E.~H. Lieb and M.~Loss, \emph{{Analysis}}, second ed., American Mathematical
  Society, Providence, R.I., 2001.

\bibitem[LS73]{LuttSyEnergy}
J.~M. Luttinger and H.~K. Sy, \emph{{Low-lying energy spectrum of a
  one-dimensional disordered system}}, Phys. Rev. A \textbf{7} (1973),
  701--712.

\bibitem[LS02]{LiebSeiringer2002}
E.~H. Lieb and R.~Seiringer, \emph{{Proof of {B}ose--{E}instein condensation
  for dilute trapped gases}}, {Phys. Rev. Lett.} \textbf{88} ({2002}), 170409.

\bibitem[LSSY05]{lieb2005mathematics}
E.~H. Lieb, R.~Seiringer, J.~P. Solovej, and J.~Yngvason, \emph{The mathematics
  of the {B}ose gas and its condensation}, vol.~34, Springer Science \&
  Business Media, 2005.

\bibitem[Mic07]{M07}
A.~Michelangeli, \emph{{Reduced density matrices and {B}ose--{E}instein
  condensation}}, SISSA \textbf{39} (2007).

\bibitem[MR96]{meester1996continuum}
R.~Meester and R.~Roy, \emph{Continuum percolation}, vol. 119, Cambridge
  University Press, 1996.

\bibitem[PF92]{pastur1992spectra}
L.~A. Pastur and A.~Figotin, \emph{{Spectra of random and almost-periodic
  operators}}, Springer-Verlag, 1992.

\bibitem[PO56]{PO56}
O.~Penrose and L.~Onsager, \emph{{{B}ose--{E}instein condensation and Liquid
  Helium}}, Phys. Rev. \textbf{104} (1956), 576--584.

\bibitem[RS18]{RouSpeh2023}
N.~Rougerie and D.~Spehner, \emph{{Interacting bosons in a double-well
  potential: localization regime}}, {Commun. Math. Phys.} \textbf{361}
  ({2018}), 737--786.

\bibitem[Sei11]{Seiringer2011}
R.~Seiringer, \emph{{The excitation spectrum for weakly interacting bosons}},
  {Commun. Math. Phys.} \textbf{306} ({2011}), 565--578.

\bibitem[Szn98]{sznitman1998brownian}
A.-S. Sznitman, \emph{Brownian motion, obstacles and random media}, Springer
  Science \& Business Media, 1998.

\bibitem[Szn23]{SZNITMAN2023104197}
\bysame, \emph{On the spectral gap in the {K}ac--{L}uttinger model and
  {B}ose--{E}instein condensation}, Stochastic Processes and their Applications
  (2023), 104197.

\end{thebibliography}
	
\end{document}